\newtheorem{lemma}{Lemma}
\newtheorem{theorem}{Theorem}
\newtheorem{proposition}{Proposition}
\newenvironment{breakablealgorithm}
  {
   \begin{center}
     \refstepcounter{algorithm}
     \hrule height.8pt depth0pt \kern2pt
     \renewcommand{\caption}[2][\relax]{
       {\raggedright\textbf{\ALG@name~\thealgorithm} ##2\par}%
       \ifx\relax##1\relax 
         \addcontentsline{loa}{algorithm}{\protect\numberline{\thealgorithm}##2}%
       \else 
         \addcontentsline{loa}{algorithm}{\protect\numberline{\thealgorithm}##1}%
       \fi
       \kern2pt\hrule\kern2pt
     }
  }{
     \kern2pt\hrule\relax
   \end{center}
  }
\begin{document}
%
\title{Fast Budgeted Influence Maximization over Multi-Action Event Logs}

%
%
%

\author{Qilian~Yu,~\IEEEmembership{Student Member,~IEEE,}
        Hang~Li,~\IEEEmembership{Member,~IEEE,}\\
        Yun~Liao,~\IEEEmembership{Student Member,~IEEE,}
        and~Shuguang~Cui,~\IEEEmembership{Fellow,~IEEE}
\thanks{Q. Yu, H. Li and S. Cui are with the Department
of Electrical and Computer Engineering, University of California, Davis, CA, 95616 USA.
 Emails: \{dryu, hdli, sgcui\}@ucdavis.edu.}
\thanks{Y. Liao is with the Department
of Electrical and Computer Engineering, University of California, San Diego, CA, 92093 USA. Email: yunliao@ucsd.edu.}
\thanks{The short conference version of this paper has been submitted to ICASSP 2018.}
}

\maketitle

\begin{abstract}
In a social network, influence maximization is the problem of identifying a set of users that own the maximum {\it influence ability} across the network. In this paper, a novel credit distribution (CD) based model, termed as the multi-action CD~(mCD) model, is introduced to quantify the influence ability of each user, which works with practical datasets where one type of action could be recorded for multiple times. Based on this model, influence maximization is formulated as a submodular maximization problem under a general knapsack constraint, which is NP-hard. An efficient streaming algorithm with one-round scan over the user set is developed to find a suboptimal solution. Specifically, we first solve a special case of knapsack constraints, i.e., a cardinality constraint, and show that the developed streaming algorithm can achieve ($\frac{1}{2}-\epsilon$)-approximation of the optimality. Furthermore, for the general knapsack case, we show that the modified streaming algorithm can achieve ($\frac{1}{3}-\epsilon$)-approximation of the optimality.  Finally, experiments are conducted over real Twitter dataset and demonstrate that the mCD model enjoys high accuracy compared to the conventional CD model in estimating the total number of people who get influenced in a social network. Moreover, through the comparison to the conventional CD, non-CD models, and the mCD model with the greedy algorithm on the performance of the influence maximization problem, we show the effectiveness and efficiency of the proposed mCD model with the streaming algorithm.
\end{abstract}

\begin{IEEEkeywords}
Online Social Networks, Influence Maximization, Credit Distribution, Streaming Algorithm.
\end{IEEEkeywords}

%
\IEEEpeerreviewmaketitle

\section{Introduction}\label{sec:introduction}
As  information technology advances, people get informed through numerous media channels spanning over conventional media (e.g., newspapers, radio or TV programs) and modern social media (e.g., mobile APPs, electronic publications, or world wide web (WWW)). Since computers and smart phones become more and more popular, information now spreads at a speed faster than ever before. In particular, people are ubiquitously connected by online social networks, and a person's behavior may quickly affect others, who may further perform some relevant actions. For example, after a celebrity posts a new message on Twitter, many followers read this tweet and then retweet. Then, the friends of these followers may do such actions as well. Consequently, the same tweet could be posted again and again, while more and more people are involved. This phenomenon in social networks is referred to as {\it influence propagation}. Here, such a celebrity could be called the {\it influencer}. Note that, in general, there may be more than one influencer for one particular event. 

It is easy to see that influencers may have significant impacts on the dynamics in social networks, and thus the problem of influencer identification has drawn great attention in both academia and industry~\cite{1,2}.~One such pioneer work is about viral marketing~\cite{3,4}, where a new product is advertised to a target group of people such that the advertisement could reach a large fraction of the social network users. In Later work~\cite{5}, the influencer identification problem is commonly formulated as an \textbf{influence maximization problem}: Given an influence propagation model, find $k$ ``seed'' nodes such that the expected number of nodes that eventually get ``influenced'' is maximized. Two propagation models, i.e., the Independent Cascade (IC) model and the Linear Threshold (LT) model, are widely used. In these two graph based models, one of the most important parameters is the edge weight, which represents the probability that a person gets influenced and takes a similar action as what his or her socially connected neighbors do. In existing works~\cite{5,6,7,8,9}, the weight of each edge is usually determined by one of the following methods: 1) assigning a constant (e.g., 0.01), 2) assigning a random number from a small set (e.g., \{0.1, 0.01, 0.001\}), 3) assigning the reciprocal of a node's in-degree, or 4) assigning a value based on real data.

Although accelerated greedy algorithms have been developed~\cite{10,11} to mitigate the high computation cost in influence maximization, all works mentioned above~\cite{5,6,7,8,9,10,11} need significant Monte-Carlo simulations to calculate the expected number of influenced nodes, which prevents their results from being implemented in analyzing large-scale social networks. Recently, a statistics based algorithm~\cite{martingales} and an extended stop-and-stare algorithm~\cite{mssa} haven been proposed to scale up the influence maximization problem with approximation guarantees based on propagation models. However, in order to implement any of them, edge weights must be pre-assigned. To eliminate such a need, the Credit Distribution (CD) model~\cite{12} was proposed to measure the influence merely based on the history of user behaviors. Following~\cite{12}, some extended CD models have been proposed to improve the estimation accuracy over the total number of people that finally get influenced by introducing node features~\cite{13} and time constraints~\cite{14}, respectively. 

The aforementioned CD based models can be trained by datasets or event logs composed of user indices, actions, and timestamps. However, the datasets used for the CD based models in existing works~\cite{12,13,14} usually have a simplified structure such that they only record one timestamp of a certain action for each user. By using such datasets, they also implicitly assume that each user takes the same action for at most once. It is easy to see that such a setup is oversimplified, since a user may take the same action multiple times. Moreover, the user who repeatedly performs a certain action is potentially more influential than the one who just performs the same action once. 

This issue can be easily verified in social networks like Twitter~\cite{15} or Facebook, where users may participate in the discussion on some topics more than once. 
\begin{figure}
   \centering
   \includegraphics[width=1\linewidth]{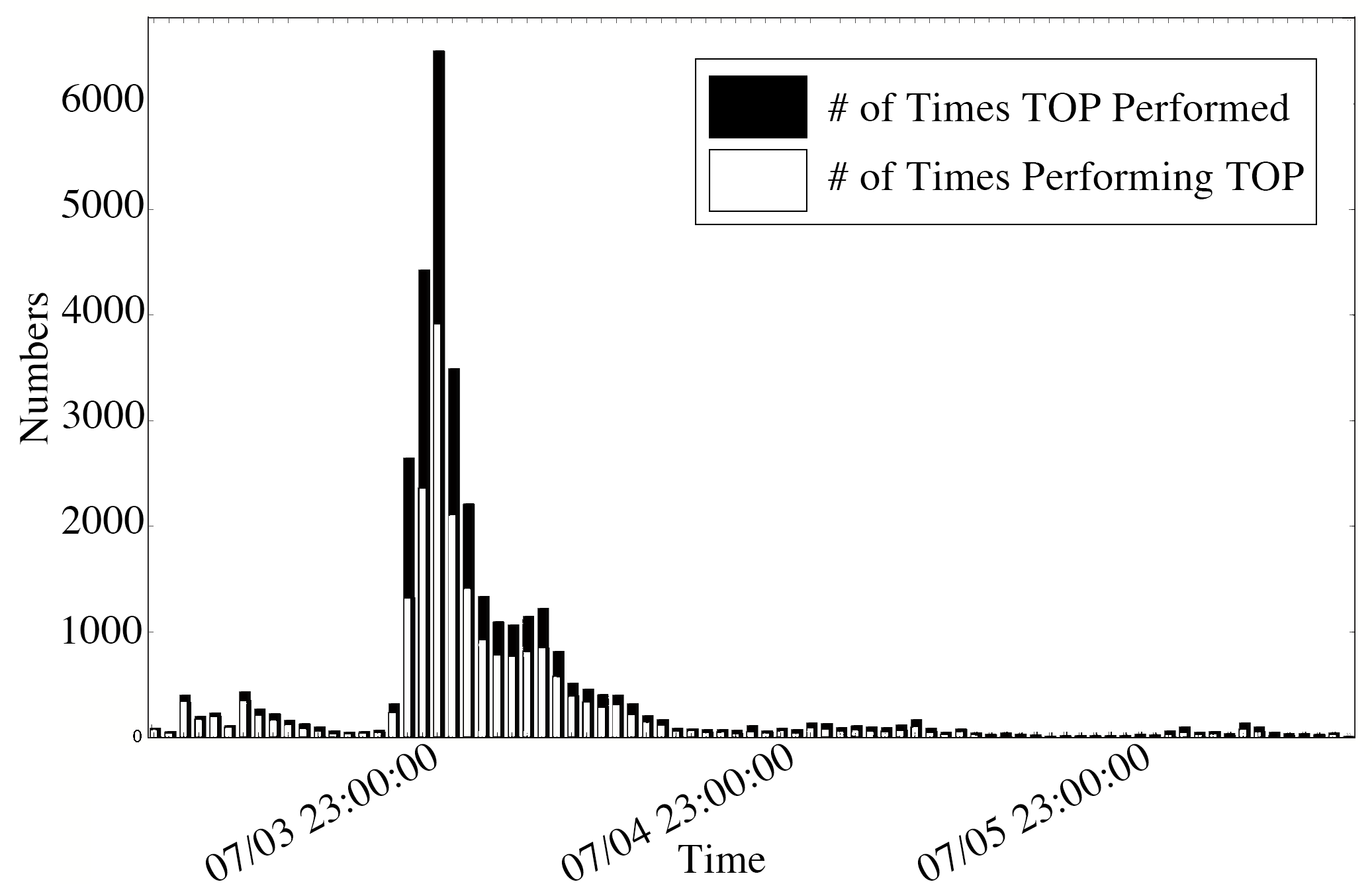} 
   \caption{Difference between Action Frequency and Numbers of Users Involved.}
   \label{fig:top1}
\end{figure}
In Fig. \ref{fig:top1}, we investigate an action named ``TOP", where we compute over time how frequently the ``TOP" action is taken and the number of users taking this action within the given time interval. It can be observed that they both follow the same decay fashion and there is a big difference between the two sequences of bars. For example, from 23:00:00 to 23:59:59 on July 3rd, there are about 4,000 people taking the ``TOP" action, while such an action is performed more than 6,500 times. The observed difference implies that some users indeed take the action ``TOP'' for multiple times. To further quantify how the users repeat the same action, we define the repetition rate of action $a$ as 
\begin{align*} 
  1 - \frac{\text{Number of Users Performing Action }a}{\text{Number of Times Action } a \text{ is Performed}}.
\end{align*}

The value of repetition rate refers to the percentage of the executions that are not performed for the first time over the total number of executions. For the 100 most common actions in the dataset, we find that there are about 43\% of the actions with repetition rates over 10\% as shown in Fig. \ref{fig:rpr}. Therefore, it is useful to develop a model that can take the repetition of actions into account.
\begin{figure}
   \centering
   \includegraphics[width=1.8in]{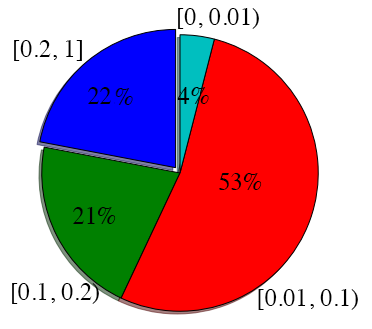} 
   \caption{Repetition Rate of the 100 Most Common Actions.} 
   \label{fig:rpr}
\end{figure}

In this paper, we perform data analysis on a multi-action event log, where the same action for one particular user is recorded for multiple times if the user performs this action repeatedly. To deal with such an event log, we propose a novel \emph{multi-action credit distribution} (mCD) model, which uses the time-dependent ``credit'' to quantify the influence ability for each user. Based on this model, we formulate a budgeted influence maximization problem, which aims to identify a subset of users with the maximum influence ability of the selected subset. In this problem, the objective function, i.e., the influence ability, is submodular; and a knapsack constraint is added to regulate the cost for user selection. This problem is NP-hard. 

To solve this problem, we first consider a simplified case with a cardinality constraint. By utilizing submodularity, we develop an efficient streaming algorithm that scans through the user set only once and uses a certain threshold to decide whether a user should be added to the solution set. Such a streaming algorithm is within ($\frac{1}{2}-\epsilon$) of the optimality. Then, we modify the algorithm to solve the general knapsack case, which can guarantee ($\frac{1}{3}-\epsilon$) of the optimality. Experimental results over real Twitter dataset show: 1) Compared to the existing CD and non-CD models, the mCD model is more accurate in estimating the total number of people that finally get influenced by the selected set; and 2) under the mCD model, the proposed streaming algorithms can achieve a similar utility as the greedy algorithm CELF~\cite{11}, while having a much faster computation speed.

The rest of this paper is organized as follows. In Section~2, we describe the mCD model along with the formulation of the influence maximization problem. In Section~3, we introduce a learning algorithm to train the mCD model with the given event log, and present two streaming algorithms to solve the influence maximization problem under a cardinality  constraint and a knapsack constraint, respectively. In Section~4, we numerically demonstrate the performance of the proposed mCD model and the corresponding streaming algorithms over real data collected from Twitter. Finally, we conclude the paper in Section~5.

\section{System Model and Problem Formulation}
In this paper, an online social network is modeled as an unweighted directed graph $\mathcal{G} = (\mathcal{V}, \mathcal{E})$~\cite{12,13}, where the node set $\mathcal{V}$ is the set of all users and the edge set $\mathcal{E}$ indicates the social relationship among all the users. Specifically, for any $u,v \in \mathcal{V}$, there is a directed edge $(v,u)$ (from $v$ to $u$) if $v$ is socially followed by $u$, which implies that $v$ could potentially ``influence'' $u$. The collected data from this social network is a multi-action event log $\mathcal{L}$ with records in the form of (USER, ACTION, TIME), where a corresponding tuple ($u, a, t$)~$\in \mathcal{L}$ indicates that user $u$ performed action $a$ at time $t$. The action $a$ is from a finite action set~$\mathcal{A}$. Here, each action $a$ corresponds to a specific discussion topic. A user $u$ performed action $a$ means that he/she got involved in the discussion of that topic.

\subsection{Multi-Action Credit Distribution} 
In conventional CD model~\cite{12}, the main idea is to assign ``credits'' to the possible influencers according to the event log. The total assigned credits to a user consist of direct credits and indirect credits. If the neighbors of user $u$ perform certain action by following $u$, direct credits are then assigned to user $u$ by its neighbors. If the users get influenced by $u$ through multiple hops, indirect credits will be assigned to $u$. The value of indirect credits are computed along all possible trails. The conventional CD model can effectively quantify the influence ability of each user in a single-action event log, but not for the case with a multi-action event log. Next, we introduce the detailed design of the mCD model that can handle multi-action event logs.

Suppose that in a multi-action event log $\mathcal{L}$, it is recorded that user $u$ performs action $a$ for $A_u(a)$ times. For user-action pair $(u,a)$, if $A_u(a) \geq 1$, let $t_i(u,a)$ denote the timestamp when user $u$ performs acton $a$ for the $i$-th time; otherwise, the timestamp is not needed. Next, we denote $\mathcal{A}_u$ as the set of actions that are performed by user $u$. Note that the conventional CD model is a special case of the proposed mCD model, in which $A_u(a) \in \left\{ 0,1 \right\}$ for all $u \in \mathcal{V}$ and $a \in \mathcal{A}$. Based on the directed graph $\mathcal{G}$ and the multi-action event log $\mathcal{L}$, for any action $a \in \mathcal{A}$, we define a directed graph $\mathcal{G}(a)$ that is generated from  $\mathcal{G}$ according to the propagation of action~$a$. Specifically, we define $\mathcal{G}(a) = (\mathcal{V}(a), \mathcal{E}(a))$ such that $\mathcal{V}(a) = \{ v \in \mathcal{V} | A_v(a) \geq 1\}$  and $\mathcal{E}(a) = \{(v,u) \in \mathcal{E} | t_1(v,a) < t_1(u,a), A_u(a) \cdot A_v(a) \geq 1\}$. Then, for any user $u$ who performs action $a$, we let $\mathcal{N}_{in}(u,a) = \{v | (v,u) \in \mathcal{E}(a)\}$ denote the set of direct influencers for user $u$, i.e., the neighbors of user $u$ who perform action $a$ earlier than user $u$. Furthermore, we denote by $\mathcal{N}_{in}(\mathcal{S},a) = \{v | v \in \mathcal{N}_{in}(u,a), u \in \mathcal{S}, v \notin \mathcal{S} \}$ as the neighborhood of a given user set $\mathcal{S}$ with respect to action~$a$.

For a given action $a$, we define a timestamp set $\mathcal{T}_{v,u}(a)=\{t_i(v,a)|t_i(v,a)<t_1(u,a), 1\leq i \leq A_v(a)\}$ for every pair of users $u$ and $v$ such that $u \in \mathcal{V}(a)$ and $v\in \mathcal{N}_{in}(u,a)$, which is a collection of timestamps of $v$ performing action $a$ before user $u$. Intuitively, each time when user $v$ performs the action, it causes influence on user $u$, since $v$ and $u$ have a directed edge $(v,u)$ in $\mathcal{G}(a)$. To take this effect into consideration, we consider a series of delays that can be expressed by the timestamp differences, i.e., $t_1(u,a)-t$, for all $t \in \mathcal{T}_{v,u}(a)$. Note that the conventional CD model just simply uses one delay to calculate the direct credit. Here, on the other hand, we adopt an the effective delay from $v$ to $u$ on action $a$, which is defined as
\begin{align} 
\label{eq:h_mean}
   \Delta t_{v,u}(a) = \frac{1}{\sum_{t \in \mathcal{T}_{v,u}(a)}{(t_1(u,a) - t)^{-1}}}.
\end{align}
Note that $\Delta t_{v,u}(a)$ equals the harmonic mean of $\{(t_1(u,a)-t)\}$ devided by $|\mathcal{T}_{v,u}(a)|$. There are some useful properties of $\Delta t_{v,u}(a)$: 1) $\Delta t_{v,u}(a)\le \min\{(t_1(u,a)-t)\}$ for $t \in \mathcal{T}_{v,u}(a)$, and 2) $\Delta t_{v,u}(a)$ decreases as  $|\mathcal{T}_{v,u}(a)|$ increases.

The definition of $\Delta t_{v,u}(a)$ is inspired by the calculation of parallel resistance, where the effective resistance of multiple parallel resistors is mainly determined by the smallest one, and whenever a new resistor is added in parallel, the effective resistance decreases. Similarly, while every time user $v$ taking action $a$ poses some influence on user $u$, it is reasonable to assume that the most recent action induces the most significant influence. Thus, it is desired that the value of the effective delay $\Delta t_{v,u}(a)$ is dominated by $\min \{ t_1(u,a)-t | t\in \mathcal{T}_{v,u}(a) \}$. In addition, a user $u$ would be more likely to follow his neighbor $v$ on action $a$ if $v$ takes action $v$ for many times. In other words, by repeatedly taking the same action, user $v$ poses stronger influence on user $u$. Based on the definition of effective delay, we next define direct credit and indirect credit.

\textbf{Direct Credit.} This credit is what user $u$ assigns to its neighbor $v$ when $u$ takes the same action $a$ after $v$. The direct credit $\gamma_{v,u}(a)$ is defined as
\begin{align}
\label{eq:r}
   \gamma_{v,u}(a) =\exp{\left(-\frac{\Delta t_{v,u}(a)}{\tau_{v,u}}\right)} \cdot \frac{1}{R_{u,a}},
\end{align}
where $\tau_{v,u}$ and $R_{u,a}$ are normalization factors. Note that the direct credit decays exponentially over the effective delay $\Delta t_{v,u}(a)$. Such an exponential expression follows from the original definition of the CD model~\cite{12}. Here, $\tau_{v,u}$ is the mathematical average of the time delays between $v$ and $u$ over all the actions:
\begin{align} 
\label{eq:tau}
   \tau_{v,u} = \frac{1}{\left|\mathcal{A}_{v2u}\right|}\cdot\sum_{a \in \mathcal{A}_{v2u}}{\frac{\sum_{t \in \mathcal{T}_{v,u}(a)}{\left(t_1(u,a) - t\right)}}{|\mathcal{T}_{v,u}(a)|}}, 
\end{align}
where $\mathcal{A}_{v2u}$ denotes the set of actions that $v$ takes prior to $u$. In addition, $R_{u,a}$ is given by $$R_{u,a} = \sum_{v \in \mathcal{N}_{in}(u,a)}{\exp{\left(-{\Delta t_{v,u}(a)}/{\tau_{v,u}}\right)}},$$ 
which ensures that the direct credit assigned to all the neighbors of $u$ for action $a$ sums up to $1$.

To summarize, for $u, v \in \mathcal{V}$, the direct credit given to $v$ by $u$ with respect to action $a$ is given as
\begin{align}
\label{eq: gamma}
\gamma_{v,u}(a)=\left\{
\begin{array}{lcl}
\exp{\left(-\frac{\Delta t_{v,u}(a)}{\tau_{v,u}}\right)} \cdot R_{u,a}^{-1},      &      & {(v,u) \in \mathcal{E}(a)}\text{;} \\
0,       &      & {\text{otherwise.}}
\end{array} \right. 
\end{align}

\textbf{Indirect Credit.} 
Suppose that $(v,w)$ and $(w,u)$ are in $\mathcal{E}(a)$ such that $v$ and $u$ are connected indirectedly. Then, user $u$ assigns indirect credit to $v$ via $w$ as $\gamma_{v,w}(a) \cdot \gamma_{w,u}(a)$. As such, the total credits given to $v$ by $u$ on action $a$ can be defined iteratively as
\begin{align}
\label{eq:Ga} 
   \Gamma_{v,u}(a) = \sum_{w \in \mathcal{N}_{in}(u,a)}{\Gamma_{v,w}(a)\cdot \gamma_{w,u}(a)},
\end{align}
where $\Gamma_{v,v}(a) = 1$. Then, the average credit given to $v$ by $u$ with respect to all actions is defined as:
$$ \kappa_{v,u} =\left\{
\begin{array}{lcl}
0,       &      &|\mathcal{A}_u|=0;\\
\frac{1}{|\mathcal{A}_u|}\sum_{a \in \mathcal{\mathcal{A}}}{ \Gamma_{v,u}(a)},      &      & {\text{otherwise}}.
\end{array} \right. $$

Moreover, for a set of influencers $\mathcal{S}\subseteq \mathcal{V}(a)$ on action $a$, we have
$$ \Gamma_{\mathcal{S},u}(a)=\left\{
\begin{array}{lcl}
1,       &      & {u \in \mathcal{S}};\\
\sum_{w \in \mathcal{N}_{in}(u,a)}{\Gamma_{\mathcal{S},w}(a)\cdot \gamma_{w,u}(a)},      &      & {\text{otherwise}}.
\end{array} \right. $$
Similarly, we define the average credit given to $\mathcal{S}$ by $u$ with respect to all the actions as: 
$$  \kappa_{\mathcal{S},u}  =\left\{
\begin{array}{lcl}
0,       &      &|\mathcal{A}_u|=0;\\
\frac{1}{|\mathcal{A}_u|}\sum_{a \in \mathcal{\mathcal{A}}}{ \Gamma_{\mathcal{S},u}(a)},      &      & {\text{otherwise}}.
\end{array} \right. $$

Note that the average credit $\kappa_{\mathcal{S},u}$ can also be interpreted as the ``influence ability'' of the set $\mathcal{S}$ on a particular user $u$, and the value of $\kappa_{\mathcal{S},u}$ indicates how influential $\mathcal{S}$ is. Finally, we define $\sigma_{mCD}(\mathcal{S})$ as the influence ability of $\mathcal{S}$ over the whole network, which is given as
\begin{align} 
\label{objective}
   \sigma_{mCD}(\mathcal{S}) = \sum_{u \in \mathcal{V}}{\kappa_{\mathcal{S},u}}.
\end{align}
\textbf{Remark:} $\sigma_{mCD}(\mathcal{S})$ is monotone and submofular. To see this point, it is sufficient to show that $\Gamma_{\mathcal{S}, u}(a)$ is monotone and submodular for any $u\in \mathcal{V}$ and $a\in \mathcal{A}$, since a positive linear combination of monotone and submodular functions is still monotone and submodular~\cite{17}. As the propagation graph $\mathcal{G}(a)$ shares the same acyclic property, similar to the argument in~\cite{12}, we can show that $\Gamma_{\mathcal{S}, u}(a)$ is monotone and submodular by induction. First, we restrict the attention path of $\Gamma_{\mathcal{S}, u}(a)$ with length $0$ by the definition of submodularity, where the attention path is introduced to limit the indirect credit calculation through some path with a length less than a given value. Then, by assuming that submodularity holds when the attention path equals $l$, we can easily show the submodularity for the case of $l+1$. Since the maximum length of the attention path is $|\mathcal{V}|-1$, we could reach the conclusion that $\Gamma_{\mathcal{S}, u}(a)$ is monotone and submodular. Therefore, we could claim that the influence ability function $\sigma_{mCD}(\mathcal{S})$ is monotone and submodular.

\subsection{Budgeted Influence Maximization Problem}
A budgeted influence maximization problem in a social network can be formulated as finding a subset $\mathcal{S}$ of users, i.e., a seed set, from the ground set $\mathcal{V}$ to achieve maximum influence ability within some user selection budget. Note that for a general case, different user selection criteria may lead to different cost. For example, if the users are chosen and paid to spread out certain advertising information, one would expect that a user with more fans charges more than others. This is reasonable since the value of a user is related to how many people he or she can potentially influence over the network. Therefore, we introduce a knapsack constraint to quantify the user selection cost. Suppose there are $n$ users in the dataset. We denote a positive $n \times 1$ weight vector $g = (g_1, g_2, \ldots, g_n)^T$ as the unique cost for selecting each user. Denote $I_\mathcal{S} = (I_{1}, I_{2}, \cdots, I_{n})^T$ as an $n \times 1$ characteristic vector of $\mathcal{S}$, where $I_i = 1$ if $i \in \mathcal{S}$; $I_i = 0$, otherwise. Let $b$ be the total available budget on the cost for selecting users into $\mathcal{S}$. Then, the budgeted influence maximization problem could be cast as

\begin{eqnarray}\label{overallproblem}
\begin{aligned}
   &\underset{\mathcal{S} \subseteq \mathcal{V}}{\textrm{maximize}} &&\sigma_{mCD}(\mathcal{S}) \\ 
   &\textrm{subject to} && g^TI_{\mathcal{S}} \leq b.
\end{aligned}
\end{eqnarray}
For simplicity, we normalize problem~(\ref{overallproblem}) as follows. We first divide the knapsack constraint by the minimum weight $g_{\min}=\min \left\{g_i\right\}_{i=1}^n$ on both sides, i.e., $g^TI_\mathcal{S} / g_{\min}\leq b/g_{\min}$. We then treat $g/ g_{\min}$ and $b/g_{\min}$ as a new weight vector $g$ and a new budget constraint $b$ correspondingly, with a slight misuse of notations. After this manipulation, every entry in $g$ is no less than $1$ and the number of selected users will not exceed $b$. It is easy to see that the standardized problem has the same optimal solution as the original problem~(\ref{overallproblem}). For the rest of this paper, we only consider the standardized problem.

With the formulation of the budgeted influence maximization problem, it is worth noting that $\sigma_{mCD}(\mathcal{S})$ is a lower bound of the total number of users that finally get influenced over all actions, as given in Proposition~\ref{prop:1}. 
\begin{proposition} \label{prop:1}
$\sigma_{mCD}(\mathcal{S}) \leq |\cup_{a \in \mathcal{A}}\mathcal{V}(a)|.$ 
\end{proposition}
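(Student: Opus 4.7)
The plan is to show that each term $\kappa_{\mathcal{S}, u}$ in the sum defining $\sigma_{mCD}(\mathcal{S})$ is at most $1$, and that these terms vanish on users who never perform any action, so that summing over $\mathcal{V}$ immediately yields the stated bound.

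First, I would establish the per-action inequality $\Gamma_{\mathcal{S}, u}(a) \leq 1$ for every $u \in \mathcal{V}$, every $a \in \mathcal{A}$, and every seed set $\mathcal{S}$. Since $\mathcal{G}(a)$ is acyclic (its edges respect the strict temporal ordering $t_1(v,a) < t_1(u,a)$), this can be proved by induction on the length of the longest path into $u$ in $\mathcal{G}(a)$, analogously to the induction used for submodularity in the Remark. The base case is a node with $\mathcal{N}_{in}(u, a) = \emptyset$, where $\Gamma_{\mathcal{S}, u}(a)$ is either $1$ (if $u \in \mathcal{S}$) or $0$. For the inductive step, I apply the recursion
\begin{align*}
\Gamma_{\mathcal{S}, u}(a) = \sum_{w \in \mathcal{N}_{in}(u, a)} \Gamma_{\mathcal{S}, w}(a) \cdot \gamma_{w, u}(a)
\end{align*}
together with the normalization $\sum_{w \in \mathcal{N}_{in}(u, a)} \gamma_{w, u}(a) = 1$ built into $R_{u,a}$; combined with $\Gamma_{\mathcal{S}, w}(a) \leq 1$ from the inductive hypothesis, this gives $\Gamma_{\mathcal{S}, u}(a) \leq 1$.

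Next, I would observe that, by the construction of $\mathcal{E}(a)$, both endpoints of every edge must have performed action $a$; in particular, $\mathcal{N}_{in}(u, a) = \emptyset$ whenever $u \notin \mathcal{V}(a)$, and therefore $\Gamma_{\mathcal{S}, u}(a) = 0$ for any $u \notin \mathcal{V}(a) \cup \mathcal{S}$. Combining these two facts with the piecewise definition of $\kappa_{\mathcal{S}, u}$, I would conclude that, for every $u$ with $|\mathcal{A}_u| > 0$,
\begin{align*}
\kappa_{\mathcal{S}, u} = \frac{1}{|\mathcal{A}_u|} \sum_{a \in \mathcal{A}_u} \Gamma_{\mathcal{S}, u}(a) \leq \frac{|\mathcal{A}_u|}{|\mathcal{A}_u|} = 1,
\end{align*}
while $\kappa_{\mathcal{S}, u} = 0$ whenever $|\mathcal{A}_u| = 0$, i.e., exactly when $u \notin \bigcup_{a \in \mathcal{A}} \mathcal{V}(a)$. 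Summing over $u \in \mathcal{V}$ then gives $\sigma_{mCD}(\mathcal{S}) \leq |\bigcup_{a \in \mathcal{A}} \mathcal{V}(a)|$.

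The delicate point I anticipate is the treatment of seeds $u \in \mathcal{S}$ that lie outside $\mathcal{V}(a)$ for some action $a$: the piecewise definition literally sets $\Gamma_{\mathcal{S}, u}(a) = 1$ whenever $u \in \mathcal{S}$, but the qualifier ``$\mathcal{S} \subseteq \mathcal{V}(a)$'' attached to that recursion indicates that, per action, only the seeds in $\mathcal{S} \cap \mathcal{V}(a)$ should act as sources. Under this natural reading one has $\Gamma_{\mathcal{S}, u}(a) = 0$ for $u \in \mathcal{S} \setminus \mathcal{V}(a)$, so the sum over $a$ in $\kappa_{\mathcal{S}, u}$ collapses to a sum over $\mathcal{A}_u$ as used above; without it, averaging by $|\mathcal{A}_u|$ rather than $|\mathcal{A}|$ would make the bound fail for $u \in \mathcal{S}$. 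Once this convention is fixed, the three ingredients above combine into the desired inequality.
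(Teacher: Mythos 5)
Your proposal is correct and follows essentially the same route as the paper's proof: both arguments bound $\Gamma_{\mathcal{S},u}(a)\le 1$ via the normalization $\sum_{v\in\mathcal{N}_{in}(u,a)}\gamma_{v,u}(a)=1$ (you by induction on the acyclic graph $\mathcal{G}(a)$, the paper by unrolling the nested path sums from the deepest hop outward), and then restrict the average in $\kappa_{\mathcal{S},u}$ to $\mathcal{A}_u$ before summing over users. Your explicit handling of the convention for seeds $u\in\mathcal{S}\setminus\mathcal{V}(a)$ is precisely what the paper's proof silently assumes when it replaces $\sum_{a\in\mathcal{A}}$ by $\sum_{a\in\mathcal{A}_u}$, so that remark is a welcome clarification rather than a departure.
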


\begin{proof}
Given an action $a\in \mathcal{A}$ and a user $u \in \mathcal{V}(a)$, let $h$ denote the maximum hops that $u$ distributes credits to $\mathcal{N}_{in}(\mathcal{S},a)$. Then, the total credit can be expressed as
\begin{align*} 
\begin{split}
   \Gamma_{\mathcal{S},u}(a) = \sum_{w_1 \in \mathcal{N}_{in}(u,a)}\gamma_{w_1, u}(a) \left( \sum_{w_2 \in \mathcal{N}_{in}(w_1,a)}\gamma_{w_2, w_1}(a) \cdot \right. \\
  \left. \left(  \cdots \left( \sum_{v \in \mathcal{N}_{in}(w_h, a)} \gamma_{v,w_h}(a) \cdot \Gamma_{\mathcal{S},v}(a) \right)\right) \right). 
\end{split}
\end{align*}

According to the definition of direct credits, for any $v \in \mathcal{V}$ and $a \in \mathcal{A}$, we have a normalizer $R_{v,a}$ to ensure $$\sum_{v' \in \mathcal{N}_{in}(v,a)} \gamma_{v',v}=1.$$ Thus, for $v \in \mathcal{N}_{in}(w_h, a)$, we have 
\begin{align*} 
\Gamma_{\mathcal{S}, v}(a) &= \sum_{v' \in \mathcal{S} \cap \mathcal{N}_{in}(v,a)} \gamma_{v',v}(a)\\ &\leq \sum_{v' \in \mathcal{N}_{in}(v,a)} \gamma_{v',v}(a)=1, 
\end{align*}
and then,
$$\sum_{v \in \mathcal{N}_{in}(w_h, a)} \gamma_{v,w_h} \cdot \Gamma_{\mathcal{S},v}(a) \leq 1.$$

Analogously, we can show that the total credit given by any $u\in \mathcal{V}(a)$ on action $a \in \mathcal{A}$ is bounded, i.e., $\Gamma_{\mathcal{S},u}(a) \leq 1$. By the definition of influence ability, we have
\begin{align*} 
    \sigma_{mCD}(\mathcal{S}) &= \sum_{u \in \cup_{a \in \mathcal{A}}\mathcal{V}(a)} \frac{1}{|\mathcal{A}_u| } \sum_{a \in \mathcal{A}} \Gamma_{\mathcal{S},u}(a) \\&=  \sum_{u \in \cup_{a \in \mathcal{A}}\mathcal{V}(a)} \frac{1}{|\mathcal{A}_u| } \sum_{a \in \{a| a \in \mathcal{A}, u_u(a) \geq 1\}} \Gamma_{\mathcal{S},u}(a) \\ &= \sum_{u \in \cup_{a \in \mathcal{A}}\mathcal{V}(a)} \frac{1}{|\mathcal{A}_u| } \sum_{a \in \mathcal{A}_u} \Gamma_{\mathcal{S},u}(a) \\ &\leq \sum_{u \in \cup_{a \in \mathcal{A}}\mathcal{V}(a)} \frac{1}{|\mathcal{A}_u| } \cdot |\mathcal{A}_u| \\ &= |\cup_{a \in \mathcal{A}}\mathcal{V}(a)|.
\end{align*}

Note that when we evaluate a single action $a$, $\sigma_{mCD} \left(\mathcal{S} \right)$ provides a 
lower bound of $\mathcal{V}(a)$.

\end{proof}

Therefore, problem~(\ref{overallproblem}) is to find a subset $\mathcal{S}$ from the ground set $\mathcal{V}$ to maximize a lower bound of  the total number of users that finally get influenced over all actions. In Section~4, we will numerically show that the influence ability $\sigma_{mCD}(\mathcal{S})$ in the mCD model, although as a lower bound, provides a more accurate approximation of $|\mathcal{V}(a)|$ with respect to each action $a \in \mathcal{A}$, compared to the influence ability in the CD model. 

As aforementioned, the objective function of problem (\ref{overallproblem}) is monotone and submodular. Therefore, problem~(\ref{overallproblem}) is a submodular maximization problem under a knapsack constraint, which has been proved to be NP-hard~\cite{16}. In general, such a submodular problem can be approximately solved by greedy algorithms~\cite{10,16}. However, due to the large volume of online social network datasets, the implementation of greedy algorithms is not practical. In the next section, we develop an efficient streaming algorithm to solve the budgeted influence maximization problem under the mCD model.

\section{Algorithms}
The proposed algorithm is divided into the following modules, as shown in Fig.~\ref{fig:system_model}. The module ``Model Learner'' is designed to learn the parameters $\{\tau_{v,u}\}$ (the mathematical average time delay between each pair of $v$ and $u$ over all actions) and $\{A_u(a)\}$ (the frequency of $u$ taking action $a$), from the training dataset before solving the optimization problem, such that the algorithm can deal with a newly arriving dataset or test set much more efficiently. Then, for the new or test set of data, we start with the preprocessing module ``Log Scanner'', which scans the dataset to calculate the total credit $\Gamma_{v,u}(a)$ assigned to user $v$ by $u$ for action $a$ by using the already learned $\{\tau_{v,u}\}$ and $\{A_u(a)\}$ from the training set. The last but the most important module  ``Problem Solver'' solves the influence maximization problem~(\ref{overallproblem}) based on $\{\Gamma_{v,u}(a)\}$ and outputs the seed set. 
\begin{figure}[H]
   \centering
   \includegraphics[width=3in]{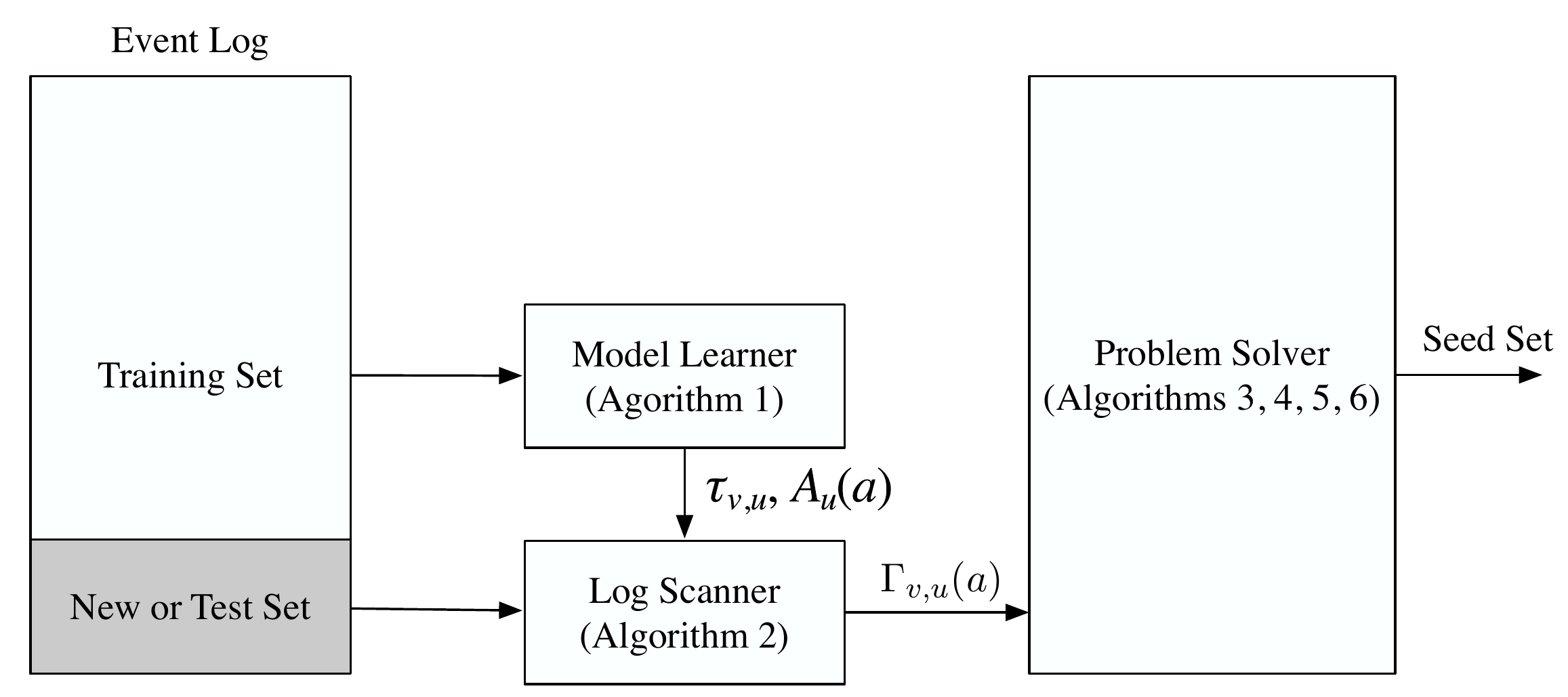} 
   \caption{Overall Algorithm Structure for Influence Maximization.}
   \label{fig:system_model}
\end{figure}

\subsection{Parameter Learning}
The main function of Model Learner is to learn $\{\tau_{v,u}\}$ and $\{A_u(a)\}$ from the event log, where $\tau_{v,u}$ is mainly determined by $\mathcal{T}_{v,u}(a)$ and $A_{v2u} =|\mathcal{A}_{v2u}|$, according to Eq.~(\ref{eq:tau}). Since $\mathcal{T}_{v,u}(a)$ can be directly constructed from the event log according to its definition in Section II-A, the key problem is to compute $A_{v2u}$, or equivalently to find all parents of $u$ for a particular action $a$.  Here, we propose Algorithm~\ref{Learning} to obtain $\{\tau_{v,u}\}$ by computing  $\{A_{v2u}\}$.

\begin{algorithm}
\caption{MODEL LEARNER}
\label{Learning}
\begin{algorithmic}[1]
\State \textbf{Initialize} $A_{v2u} := 0$ for all users and edges.
\NoDoFor{each action $a$ in training set}
	\State $current\_table := \emptyset$.
	\State $A_u(a) := 0$ for all users.
	\NoDoFor{each tuple $<u,a,t>$ in chronological order}
		\State \textbf{if} $A_u(a) \neq 0$, \textbf{then} continue. 
		\State $parents(u,a) := \emptyset; A_u(a) := A_u(a) + 1$.	
		\NoDoWhile{$\exists v : (v,u) \in \mathcal{E}$ and $v \in current\_table$}
			\State $parents(u,a) := parents(u,a) \cup \{v\}$.
		\EndWhile
		\State $A_{v2u} := A_{v2u} + 1$,  $\forall v \in parents(u,a)$.
		\State $current\_table := current\_table \cup \{u\}$.
	\EndFor
\EndFor
\State \textbf{Output} $\tau_{v,u}$ according to Eq. (\ref{eq:tau}), $A_u(a)$.
\end{algorithmic}
\end{algorithm}

Specifically, $current\_table$ is maintained to store user indices who have performed $a$ and have been scanned so far; and $parents(u,a)$ is a list of parents of $u$ with respect to the action $a$. The incremental update process for $A_{v2u}$ is repeated with respect to each action $a$, in order to compute the total number of actions propagated from $v$ to $u$. At the end of Algorithm~\ref{Learning}, we use the definition in Eq. (\ref{eq:tau}) to compute the average time delay between every valid user pair.

\subsection{Problem Solving}
\subsubsection{Computation of Marginal Gain}
As problem (\ref{overallproblem}) is NP-hard, it is not practical to obtain the optimal solution over large datasets. For such a problem, a greedy algorithm was proposed in~\cite{17} to obtain a suboptimal solution with a factor ($1-1/e$) away from optimality. At each step, the greedy algorithm scans over all the unselected users, and picks the user with the largest marginal gain. The drawback of the greedy algorithm is that scanning over all the unselected users repeatedly is very time-consuming, especially when the dataset is large. In this section, we propose an alternative way to calculate the marginal gain efficiently.

Based on $\{\tau_{v,u}\}$ and $\{A_u(a)\}$ obtained by ``Model Learner'', we can compute the marginal gain based on the definition of the influence ability given in Eq. (\ref{objective}) directly. However, it requires the computation of the total credit $\{\Gamma_{v,u}(a) \}$ for each user as well as the total credit for each pair of neighbors, which is quite inefficient under a big data setting. Thus, we adopt the following alternative and equivalent method to calculate the marginal gain. First, denote by $\Gamma_{x,u}^{\mathcal{V} - \mathcal{S}}(a)$ the total credit given to $x$ by $u$ on action $a$ through the paths that are contained completely in the subgraph induced by $\mathcal{V} - \mathcal{S} = \left\{v \in \mathcal{V}| v \notin \mathcal{S} \right\}$. Note that when $\mathcal{S}$ is the null set, we have $\Gamma_{x,u}^{\mathcal{V} - \mathcal{S}}(a) = \Gamma_{x,u}(a)$ as defined in Eq. (\ref{eq:Ga}). For the subgraphs, the following lemmas hold.
\begin{lemma}
\label{gamma_v_u}
$\Gamma_{v,u}^{\mathcal{S}-x}(a) = \Gamma_{v,u}^{\mathcal{S}}(a) - \Gamma_{v,x}^{\mathcal{S}}(a) \cdot \Gamma_{x,u}^{\mathcal{S}}(a)$.
\end{lemma}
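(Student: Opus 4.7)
The plan is to unroll the recursive definition of $\Gamma_{v,u}^{\mathcal{S}}(a)$ into an explicit sum over directed paths in $\mathcal{G}(a)$ and then perform a splitting argument at the vertex $x$. The key enabling fact is that $\mathcal{G}(a)$ is a directed acyclic graph, because its edge set $\mathcal{E}(a)$ is defined by the strict temporal condition $t_1(v,a) < t_1(u,a)$. Iterating the recursion $\Gamma_{v,u}^{\mathcal{S}}(a) = \sum_{w \in \mathcal{N}_{in}(u,a) \cap \mathcal{S}} \Gamma_{v,w}^{\mathcal{S}}(a) \cdot \gamma_{w,u}(a)$, together with the base case $\Gamma_{v,v}^{\mathcal{S}}(a) = 1$, I would rewrite
\begin{equation*}
\Gamma_{v,u}^{\mathcal{S}}(a) \;=\; \sum_{P \in \mathcal{P}_{v \to u}^{\mathcal{S}}} \;\prod_{(w,w') \in P} \gamma_{w,w'}(a),
\end{equation*}
where $\mathcal{P}_{v \to u}^{\mathcal{S}}$ denotes the set of directed paths in $\mathcal{G}(a)$ from $v$ to $u$ that lie entirely inside the subgraph induced by $\mathcal{S}$.

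Next, I would partition $\mathcal{P}_{v \to u}^{\mathcal{S}}$ into those paths that avoid $x$ and those that pass through $x$. The $x$-avoiding paths are precisely $\mathcal{P}_{v \to u}^{\mathcal{S}-x}$, so their aggregate contribution is exactly $\Gamma_{v,u}^{\mathcal{S}-x}(a)$. For any path $P$ passing through $x$, acyclicity forces $x$ to appear on $P$ exactly once, so $P$ splits uniquely into a prefix $P_1 \in \mathcal{P}_{v \to x}^{\mathcal{S}}$ and a suffix $P_2 \in \mathcal{P}_{x \to u}^{\mathcal{S}}$. Since the product of direct credits along $P$ factors as (product along $P_1$) times (product along $P_2$), summing over the through-$x$ class and applying distributivity yields the product $\Gamma_{v,x}^{\mathcal{S}}(a) \cdot \Gamma_{x,u}^{\mathcal{S}}(a)$. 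Adding the two classes and rearranging gives the claimed identity.

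The main obstacle I anticipate is establishing the through-$x$ bijection cleanly in both directions: every through-$x$ path must decompose uniquely into a pair $(P_1,P_2)$, and conversely every such pair must concatenate back to a legitimate simple $v$-to-$u$ path in the $\mathcal{S}$-induced subgraph. Both directions hinge on the DAG property, since any repeated vertex would immediately yield a directed cycle through $x$; once this bookkeeping is recorded, the remaining algebra is routine.
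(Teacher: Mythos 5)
Your argument is correct: unrolling the recursion into a sum over directed paths and splitting at $x$ is exactly the right idea, and you correctly identify the one point that needs care --- that in the DAG $\mathcal{G}(a)$ every through-$x$ path splits uniquely and, conversely, every pair $(P_1,P_2)$ concatenates to a simple path because no vertex can both precede and follow $x$. Note that the paper itself gives no proof of this lemma (it defers to reference [12]), so there is nothing to compare against line by line; your path-decomposition argument is the standard one underlying that cited result, and it is complete as sketched.
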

\begin{lemma}
\label{gamma_S_u}
$\Gamma_{\mathcal{S}+x,u}(a) = \Gamma_{\mathcal{S},u}(a) + \Gamma_{x,u}^{\mathcal{V}-\mathcal{S}}(a) \cdot (1-\Gamma_{S,x}(a))$. 
\end{lemma}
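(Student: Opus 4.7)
The plan is to prove Lemma~2 by induction on the topological order of $u$ in the propagation DAG $\mathcal{G}(a)$, which is well-defined because the construction of $\mathcal{E}(a)$ only keeps edges $(v,u)$ with $t_1(v,a)<t_1(u,a)$, making $\mathcal{G}(a)$ acyclic. This mirrors the inductive style already invoked in the remark following Eq.~(\ref{objective}) for establishing submodularity.

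For the base case I would handle three situations. If $u\in\mathcal{S}$, both sides evaluate to $1$: the left-hand side because $u\in\mathcal{S}\cup\{x\}$, and the right-hand side because $\Gamma_{\mathcal{S},u}(a)=1$ while $\Gamma_{x,u}^{\mathcal{V}-\mathcal{S}}(a)=0$ (since $u$ is absent from the subgraph induced by $\mathcal{V}-\mathcal{S}$). If $u=x\notin\mathcal{S}$, the left-hand side is $1$ and the right-hand side telescopes as $\Gamma_{\mathcal{S},x}(a)+1\cdot(1-\Gamma_{\mathcal{S},x}(a))=1$. Finally, if $u\notin\mathcal{S}\cup\{x\}$ is a source of $\mathcal{G}(a)$, then every $\Gamma$ on both sides vanishes and the identity reduces to $0=0$.

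For the inductive step with $u\notin\mathcal{S}\cup\{x\}$ and nonempty $\mathcal{N}_{in}(u,a)$, I would unfold the defining recursion
\begin{align*}
\Gamma_{\mathcal{S}+x,u}(a)=\sum_{w\in\mathcal{N}_{in}(u,a)}\Gamma_{\mathcal{S}+x,w}(a)\,\gamma_{w,u}(a),
\end{align*}
apply the inductive hypothesis at every $w$ (each $w$ is strictly earlier in topological order than $u$), and split the resulting sum into two parts. The first, $\sum_w \Gamma_{\mathcal{S},w}(a)\gamma_{w,u}(a)$, collapses back to $\Gamma_{\mathcal{S},u}(a)$ by the definition of $\Gamma_{\mathcal{S},\cdot}$. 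The second factors the common scalar $(1-\Gamma_{\mathcal{S},x}(a))$ out of $\sum_w \Gamma_{x,w}^{\mathcal{V}-\mathcal{S}}(a)\gamma_{w,u}(a)$; I would then identify this residual sum with $\Gamma_{x,u}^{\mathcal{V}-\mathcal{S}}(a)$ via its own recursion, using that $u\in\mathcal{V}-\mathcal{S}$ and that $\Gamma_{x,w}^{\mathcal{V}-\mathcal{S}}(a)=0$ whenever $w\in\mathcal{S}$, so extending the sum from $\mathcal{N}_{in}(u,a)\cap(\mathcal{V}-\mathcal{S})$ to all of $\mathcal{N}_{in}(u,a)$ adds nothing. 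Combining the two parts produces exactly the claimed right-hand side.

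The main obstacle will be pinning down the bookkeeping around the subgraph-restricted credit $\Gamma_{x,u}^{\mathcal{V}-\mathcal{S}}(a)$: one has to confirm, consistent with the way Lemma~1 is stated, that paths counted in $\Gamma^{\mathcal{V}-\mathcal{S}}$ use the same direct credits $\gamma_{w,u}(a)$ (with the original normalizer $R_{u,a}$) rather than renormalized ones, and that the recursion $\Gamma_{x,u}^{\mathcal{V}-\mathcal{S}}(a)=\sum_{w\in\mathcal{N}_{in}(u,a)\cap(\mathcal{V}-\mathcal{S})}\Gamma_{x,w}^{\mathcal{V}-\mathcal{S}}(a)\,\gamma_{w,u}(a)$ is indeed valid. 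Once that is fixed, the induction is mechanical. As a sanity check I would offer a path-counting reading: $\Gamma_{\mathcal{S}+x,u}(a)$ is a weighted sum over back-paths from $u$ that first hit $\mathcal{S}\cup\{x\}$ at their endpoint, and partitioning these paths by whether the first hit is $x$ or a member of $\mathcal{S}$, then subtracting from $\Gamma_{\mathcal{S},u}(a)$ the paths that reach $\mathcal{S}$ by passing through $x$, reproduces exactly $\Gamma_{\mathcal{S},u}(a)+\Gamma_{x,u}^{\mathcal{V}-\mathcal{S}}(a)\,(1-\Gamma_{\mathcal{S},x}(a))$ by acyclicity of $\mathcal{G}(a)$.
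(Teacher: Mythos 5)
Your proposal is correct. Note that the paper does not actually prove this lemma --- it only remarks that the proof ``can be easily obtained by results in~\cite{12}'', i.e., it defers to the path-counting arguments used for the single-action CD model. Your induction on the topological order of $u$ in the acyclic graph $\mathcal{G}(a)$ is a sound, self-contained substitute: the three base cases check out (in particular $\Gamma_{x,u}^{\mathcal{V}-\mathcal{S}}(a)=0$ when $u\in\mathcal{S}$ and $\Gamma_{x,x}^{\mathcal{V}-\mathcal{S}}(a)=1$ when $x\notin\mathcal{S}$ are exactly what make the right-hand side collapse to $1$), and the inductive step correctly splits the sum over $\mathcal{N}_{in}(u,a)$, folds the first part back into $\Gamma_{\mathcal{S},u}(a)$, and extends the second from $\mathcal{N}_{in}(u,a)\cap(\mathcal{V}-\mathcal{S})$ to all of $\mathcal{N}_{in}(u,a)$ using $\Gamma_{x,w}^{\mathcal{V}-\mathcal{S}}(a)=0$ for $w\in\mathcal{S}$. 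You are also right to flag the one genuine bookkeeping point: the subgraph-restricted credit must be read as summing the \emph{original} direct credits $\gamma_{w,u}(a)$ (with normalizer $R_{u,a}$ computed on the full graph) over paths lying in $\mathcal{V}-\mathcal{S}$; this is the only reading under which both Lemma~\ref{gamma_v_u} and the recursion you use are valid, and it is the convention inherited from~\cite{12}. What the induction buys over the path-partition argument you sketch at the end is that it never requires convergence or explicit enumeration of path families --- it works directly from the defining recursion of $\Gamma_{\mathcal{S},u}(a)$ --- while the path-counting reading gives the cleaner intuition (partition back-paths from $u$ by whether they first hit $x$ or $\mathcal{S}$). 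Either version would be an acceptable proof; yours has no gap.
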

Then, we have the following theorem to compute the marginal gain.
\begin{theorem}
\label{th:mg}
In the mCD model, given any subset $\mathcal{S} \subseteq \mathcal{V}$ and an element $x \in \mathcal{V}-\mathcal{S}$, the marginal gain of adding $x$ into $\mathcal{S}$ equals
	$$\sum_{a\in A}\left((1-\Gamma_{\mathcal{S},x}(a)) \cdot \sum_{u \in \mathcal{V}}{\frac{1}{A_u} \cdot \Gamma_{x,u}^{\mathcal{V}-\mathcal{S}}(a)}  \right).$$
\end{theorem}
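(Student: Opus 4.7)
The plan is to expand both sides of the claimed identity directly from the definition of $\sigma_{mCD}$ in Eq.~(\ref{objective}) and then invoke Lemma~\ref{gamma_S_u} to collapse the difference into the stated product form. Concretely, I would first write the marginal gain as
\[
  \sigma_{mCD}(\mathcal{S}\cup\{x\})-\sigma_{mCD}(\mathcal{S})
  = \sum_{u\in\mathcal{V}}\frac{1}{|\mathcal{A}_u|}\sum_{a\in\mathcal{A}}\bigl(\Gamma_{\mathcal{S}+x,u}(a)-\Gamma_{\mathcal{S},u}(a)\bigr),
\]
restricting the outer sum to users with $|\mathcal{A}_u|\geq 1$ (users with $|\mathcal{A}_u|=0$ contribute $0$ to both $\sigma_{mCD}(\mathcal{S})$ and $\sigma_{mCD}(\mathcal{S}\cup\{x\})$ by the piecewise definition of $\kappa$, so they can be dropped without affecting the expression).

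Next I would apply Lemma~\ref{gamma_S_u} term by term to rewrite each inner summand as
\[
  \Gamma_{\mathcal{S}+x,u}(a)-\Gamma_{\mathcal{S},u}(a)
  = \Gamma_{x,u}^{\mathcal{V}-\mathcal{S}}(a)\cdot\bigl(1-\Gamma_{\mathcal{S},x}(a)\bigr).
\]
Substituting this into the double sum and swapping the order of summation is then immediate, because the factor $(1-\Gamma_{\mathcal{S},x}(a))$ depends only on $a$ and can be pulled outside the sum over $u$. This produces exactly
\[
  \sum_{a\in\mathcal{A}}\Bigl(1-\Gamma_{\mathcal{S},x}(a)\Bigr)\sum_{u\in\mathcal{V}}\frac{1}{|\mathcal{A}_u|}\Gamma_{x,u}^{\mathcal{V}-\mathcal{S}}(a),
\]
which matches the statement (with $A_u$ interpreted as $|\mathcal{A}_u|$, consistent with the notation elsewhere in the paper).

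Given that Lemma~\ref{gamma_S_u} is assumed, the argument is essentially a rearrangement of sums, so there is no single hard step inside this theorem's proof; the real substance is pushed into Lemma~\ref{gamma_v_u} and Lemma~\ref{gamma_S_u}. The only subtlety worth flagging explicitly is the handling of users with $|\mathcal{A}_u|=0$, where one must verify that $\Gamma_{\mathcal{S}+x,u}(a)$ and $\Gamma_{\mathcal{S},u}(a)$ contribute $0$ to $\sigma_{mCD}$ under the piecewise definition of $\kappa_{\mathcal{S},u}$, so that the $1/|\mathcal{A}_u|$ weighting used in the final formula causes no division-by-zero concern. Everything else is a clean substitution followed by an interchange of summation order.
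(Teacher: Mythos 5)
Your proposal is correct and takes precisely the route the paper intends: the paper omits the proof entirely, stating only that the theorem follows from Lemmas~\ref{gamma_v_u} and~\ref{gamma_S_u} and results in the cited credit-distribution work, and your term-by-term application of Lemma~\ref{gamma_S_u} followed by pulling the $u$-independent factor $(1-\Gamma_{\mathcal{S},x}(a))$ out of the inner sum is exactly that argument. Your explicit handling of users with $|\mathcal{A}_u|=0$ is a reasonable extra detail that the paper glosses over.
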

The proof of the above lemmas and theorem can be easily obtained by results in~\cite{12}, which is omitted here.~With these observations, when we add a new user $x$ into $\mathcal{S}$, we see that we do not need~to iteratively calculate  $\sigma_{mCD}(\mathcal{S}+x)$ and $\sigma_{mCD}(\mathcal{S})$. Instead, we keep updating $\Gamma_{v,u}^{\mathcal{V-S}}(a)$ and~$\Gamma_{\mathcal{S},u}(a)$ using Lemmas~\ref{gamma_v_u} and~\ref{gamma_S_u}, after which we can compute the marginal gain with Theorem~\ref{th:mg}. In Algorithm~\ref{UC} below, ``Log Scanner'' scans over the test set to calculate $\Gamma_{v,u}(a)$  for every user pair $(v,u)$ on every action $a$, and stores the result in $UC[a][v][u]$. This module provides the initialization for the later-on "Problem Solver" module. 
\begin{breakablealgorithm}
\caption{LOG SCANNER}
\label{UC}
\begin{algorithmic}[1]
\State \textbf{Initialize }$ UC[a][v][u] := 0$ for all actions and users.
\NoDoFor{each action $a$ in $\mathcal{A}$}
	\State $current\_table := \emptyset$.
	\State $A_u(a):=0$ for all users. 
	\NoDoFor{each tuple $<u,a,t>$ in chronological order}
		\State \textbf{if} $A_u(a) \neq 0$, \textbf{then} continue.
		\State $parents(u) := \emptyset; A_u(a) := A_u(a) + 1$.	
		\NoDoWhile{$\exists v : (v,u) \in \mathcal{E}$ and $v \in current\_table$}
			\State $parrents(u) := parrents(u) \cup \{v\}$.
		\EndWhile
		\NoDoFor{each $v \in parrents(u)$}
			\State Compute $\gamma_{v,u}(a)$ according to Eq. (\ref{eq: gamma}).
			\State $UC[a][v][u] := UC[a][v][u] + \gamma_{v,u}(a)$.
			\State $UC[a][w][u] := UC[a][w][u] + UC[a][w][v] \cdot \gamma_{v,u}(a),$ $\forall w \in \mathcal{V}$.			
		\EndFor
		\State $current\_table := current\_table \cup \{u\}$.
	\State $UC[a][v][v] := 1$, $\forall v \in current\_table$.
	\EndFor
\EndFor
\State \textbf{Output} $UC$.
\end{algorithmic}
\end{breakablealgorithm}

With the output $UC$ by Algorithm \ref{UC}, we are ready to compute the marginal gain in the ``Problem Solver" module. The structure of the ``Problem Solver" module can be summarized as follows. In each iteration over the users, if the marginal gain of a candidate user satisfies a particular criterion\footnote{The details of the condition to select candidate users will be explained in Algorithms~\ref{str_k} and \ref{str_b}.}, this user will be added to the seed set. The core of Problem Solver is Algorithm \ref{mg}, which relies on Theorem \ref{th:mg} to compute the marginal gain. In particular, Algorithm~\ref{mg} takes the candidate user $x$, user credit $UC$, and set credit $SC$ as inputs, and returns the marginal gain $mg$ for adding node $x$ into the seed set. Here, the structure of $SC$ is indexed by $[a][x]$; it stores the total credit $\{\Gamma_{\mathcal{S}, x}(a)\}$ given to the current seed set $\mathcal{S}$ by a user $x$ for an action $a$, and is updated whenever a new user is added to the seed set $\mathcal{S}$. Once user $x$ is added to the seed set, $\Gamma_{\mathcal{S},x}(a)$ and $\Gamma_{x,u}^{\mathcal{V}-\mathcal{S}}(a)$ are updated according to Algorithm \ref{update}.

\begin{breakablealgorithm}
\caption{COMPUTE\_MARGINAL\_GAIN($x,UC,SC$)}
\label{mg}
\begin{algorithmic}[1]
\State \textbf{Initialize} $mg := 0; mg_a := 0$ for all actions.
\NoDoFor{each action $a$ such that $\exists u : UC[a][x][u] > 0$}
	\NoDoFor{each $u$ such that $UC[a][x][u] > 0$}
		\State $mg_a := mg_a + UC[a][x][u] / u_u(a)$.	
	\EndFor
	\State $mg := mg + mg_a \cdot (1- SC[a][x])$.
\EndFor
\State \textbf{return} $mg$.
\end{algorithmic}
\end{breakablealgorithm}
\begin{breakablealgorithm}
\caption{UPDATE($x, UC, SC$)}
\label{update}
\begin{algorithmic}[1]
\State $UC_{old} = UC$, $SC_{old} = SC$
\NoDoFor{each action $a$}
	\NoDoFor{each $u$}
		\State $UC[a][v][u] := UC_{old}[a][v][u] - UC_{old}[a][v][x] \cdot UC_{old}[a][x][u], \forall v \in \mathcal{V}$.
	\State $SC[a][u] := SC_{old}[a][u] + UC_{old}[a][x][u] \cdot (1- SC_{old}[a][x])$.	
	\EndFor
\EndFor
\State \textbf{return} $UC, SC$.
\end{algorithmic}
\end{breakablealgorithm}


\subsubsection{Influence Maximization Problem Solver}
With the algorithms to efficiently compute the marginal gain and update the total credits, we now arrive at the design of the streaming algorithms to solve problem~(\ref{overallproblem}). 

We start with a special case of the knapsack constraint, which is a cardinality constraint (by applying the same weight for every user). Given $k$ as the cardinality limit for $\mathcal{S}$, this simplified problem is cast as
\begin{eqnarray}\label{cardinalityproblem}
\begin{aligned}
   &\underset{\mathcal{S} \subseteq \mathcal{V}}{\textrm{maximize}} &&\sigma_{mCD}(\mathcal{S}) \\
   &\textrm{subject to} && |\mathcal{S}| \leq k.
\end{aligned}
\end{eqnarray}

In~\cite{18}, a streaming algorithm has been proposed to solve a submodular maximization problem under a cardinality constraint, whose main idea is to use a pre-defined threshold to justify whether a user is good enough to be selected. However, setting the threshold requires the priori knowledge of the optimal value of the problem. In most scenarios, this leads to the chicken and egg dilemma.

To address this issue, we adapt the threshold along the process instead of using a fixed threshold based on the priori knowledge of the optimal value. First, we assume that the maximum influence ability that can be achieved by any user $x\in\mathcal{V}$ is known as $m$ (we will remove this assumption later in this section), where $m = \max_{x \in \mathcal{V}}\sigma_{mCD}(\{ x\})$; we construct an optimum value candidate set $\mathcal{O}:=\{(1+\epsilon)^i | i \in \mathbb{Z}, m \leq (1+\epsilon)^i \leq k\cdot m\}$. Since the objective function is submodular and the cardinality constraint is $k$, it is easy to see that the optimal value lies in $[m, km]$. Moreover, the optimum value candidate set $\mathcal{O}$ has a property that there exist some values close to the true optimal value, as shown by the following lemma.

\begin{lemma} \label{SetO}
Let $\mathcal{O}:=\{(1+\epsilon)^i | i \in \mathbb{Z}, m \leq (1+\epsilon)^i \leq k\cdot m\}$
 for some $\epsilon$ with $0 < \epsilon < 1$. Then there exists a value $c \in \mathcal{O}$ such that $(1-\epsilon)\text{OPT} \leq c \leq \text{OPT}$, with \text{OPT} denoting the optimal value for problem~(\ref{cardinalityproblem}).
\end{lemma}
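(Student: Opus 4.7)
The plan is to prove the lemma in three short steps: first bound $\text{OPT}$ from above and below, then exploit the geometric-progression structure of $\mathcal{O}$ to find a candidate $c$, and finally verify the two required inequalities.

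First I would establish that $m \leq \text{OPT} \leq k\cdot m$. The lower bound is immediate: since $m = \max_{x \in \mathcal{V}} \sigma_{mCD}(\{x\})$ is the value obtained by some singleton seed set, and singletons are feasible for problem~(\ref{cardinalityproblem}) whenever $k\geq 1$, monotonicity of $\sigma_{mCD}$ gives $\text{OPT}\geq m$. For the upper bound, I would use the submodularity (equivalently, the subadditivity) of $\sigma_{mCD}$ combined with the definition of $m$: for any feasible $\mathcal{S}$ with $|\mathcal{S}|\leq k$,
\begin{equation*}
\sigma_{mCD}(\mathcal{S}) \leq \sum_{x\in \mathcal{S}} \sigma_{mCD}(\{x\}) \leq |\mathcal{S}|\cdot m \leq k\cdot m.
\end{equation*}

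Next I would pick the natural candidate $c$ from $\mathcal{O}$. Let $i^\star$ be the largest integer for which $(1+\epsilon)^{i^\star} \leq \text{OPT}$, and set $c:=(1+\epsilon)^{i^\star}$. By maximality of $i^\star$, $\text{OPT}<(1+\epsilon)^{i^\star+1}=(1+\epsilon)\,c$, so
\begin{equation*}
c \;>\; \frac{\text{OPT}}{1+\epsilon} \;\geq\; (1-\epsilon)\,\text{OPT},
\end{equation*}
where the last inequality uses $\tfrac{1}{1+\epsilon}\geq 1-\epsilon$ for $\epsilon\in(0,1)$. Together with $c\leq \text{OPT}$ by construction, this yields the sandwich required by the lemma.

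Finally I would check that $c\in\mathcal{O}$, i.e., that $m\leq c\leq k\cdot m$. The upper side follows directly from $c\leq \text{OPT}\leq k\cdot m$. For the lower side, I would argue that we may choose $i^\star$ large enough so that $c\geq m$: since $\text{OPT}\geq m$, if the largest power $(1+\epsilon)^{i^\star}$ below $\text{OPT}$ already lies in $[m,km]$ we are done; otherwise $\text{OPT}$ lies in the narrow window $[m,(1+\epsilon)m)$, and taking $c$ to be the smallest element of $\mathcal{O}$ still gives $c\leq (1+\epsilon)m\leq (1+\epsilon)\text{OPT}$ together with $c\geq m\geq \text{OPT}/(1+\epsilon)\geq (1-\epsilon)\text{OPT}$, which is the only subtle point of the proof. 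The rest is bookkeeping on the exponent range. No step is computationally deep; the only real care is in this boundary case near the bottom of $\mathcal{O}$, which is where I would concentrate the written argument.
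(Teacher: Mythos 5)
Your argument follows the same route as the paper's proof: bound $\text{OPT}$ within $[m,\,k\cdot m]$ using monotonicity for the lower end and the telescoping/subadditivity bound $\sigma_{mCD}(\mathcal{S})\le\sum_{x\in\mathcal{S}}\sigma_{mCD}(\{x\})\le km$ for the upper end, then take $c$ to be the largest power of $(1+\epsilon)$ not exceeding $\text{OPT}$ and invoke $\frac{1}{1+\epsilon}\ge 1-\epsilon$. That core is correct and matches the paper essentially line for line. Where you go beyond the paper is in noticing that this $c$ need not belong to $\mathcal{O}$, since it can fall below $m$; the paper's own proof silently skips this point (its final chain only certifies $c\ge \frac{m}{1+\epsilon}$, never $c\ge m$).

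However, your patch for that corner does not close it. When the largest power below $\text{OPT}$ is smaller than $m$, the smallest element of $\mathcal{O}$ is $(1+\epsilon)^{i^\star+1}$, which by maximality of $i^\star$ is \emph{strictly greater} than $\text{OPT}$; the bound you derive for it, $c\le(1+\epsilon)\text{OPT}$, is weaker than the lemma's claim $c\le\text{OPT}$. In fact no patch can rescue the statement as written, because it fails in exactly this corner: take $\epsilon=\frac{1}{2}$, $m=2$, $k=2$, $\text{OPT}=2.1$ (nothing in the hypotheses rules this out); then $\mathcal{O}=\{2.25,\,3.375\}$ and no element lies in $[(1-\epsilon)\text{OPT},\,\text{OPT}]=[1.05,\,2.1]$. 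The standard remedies are either to extend $\mathcal{O}$ downward to powers of $(1+\epsilon)$ that are at least $\frac{m}{1+\epsilon}$, so the largest power below $\text{OPT}$ is always included, or to weaken the conclusion to $(1-\epsilon)\text{OPT}\le c\le(1+\epsilon)\text{OPT}$ --- which is precisely what your boundary case proves, and which still suffices for the downstream use in Theorem~\ref{th:k} (Case~2 there only needs $\frac{c}{2}\le\frac{1+\epsilon}{2}\text{OPT}$ to reach the $\frac{1-\epsilon}{2}\text{OPT}$ bound). So: same approach as the paper, with a correctly diagnosed but incorrectly repaired boundary case that the paper itself also leaves open.
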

\begin{proof}
First, we choose $x \in \mathcal{V}$ such that $\sigma_{mCD}(\{x\})=m.$ We then have $\textrm{OPT}\ge \sigma_{mCD}(\{x\})=m.$ In addition, let $\{x_1,x_2,\ldots,x_k\}$ be a subset of $V$ such that $\sigma_{mCD}(\{x_1,x_2,\ldots,x_k\})=\textrm{OPT}$. By the submodularity of $\sigma_{mCD}$, we have

\begin{align*}
\begin{split}
\textrm{OPT}&=\sigma_{mCD}(\emptyset)+\sum_{i=1}^k [\sigma_{mCD}(\{x_1,x_2,\ldots,x_i\})- \\
&\sigma_{mCD}(\{x_1,x_2,\ldots,x_{i-1}\})]\\
&\le \sigma_{mCD}(\emptyset)+\sum_{i=1}^k [\sigma_{mCD}(\{x_i\})-\sigma_{mCD}(\emptyset)]\\
&\le \sum_{i=1}^k \sigma_{mCD}(\{x_i\})\le km.
\end{split}
\end{align*}
By setting $c = [1+\epsilon]^{\left\lfloor\log_{1+\epsilon}\textrm{OPT}\right\rfloor}$, we then obtain $$\frac{m}{1+\epsilon}\le \frac{\textrm{OPT}}{1+\epsilon}\le c\le \textrm{OPT}\le km,$$ and $$c\ge \frac{\textrm{OPT}}{1+\epsilon}\ge (1-\epsilon)\textrm{OPT}.$$
\end{proof}

Therefore, by constructing $\mathcal{O}$, we are able to obtain a good estimate $c$ on OPT. However, if we do not have the knowledge on $m$, we need one more scan over the user set to obtain $m$. In Algorithm~\ref{str_k}, we design a singe pass structure where we update $m$ during the iterations over user selection. Specifically, we modify $\mathcal{O}$ as $\mathcal{O}=\{(1+\epsilon)^i | i \in \mathbb{Z}, m \leq (1+\epsilon)^i \leq 2k\cdot m\}$, and maintain the variable $m$ that holds the current maximum marginal value of all single elements when the algorithm scans over the ground set. Whenever $m$ gets updated, the algorithm updates the set $\mathcal{O}$ accordingly. For each user in the ground set, we scan each element $c$ in set $\mathcal{O}$, and add that user into $\mathcal{S}_c$ as long as the marginal gain is larger than $\frac{c}{2k}$ and $|\mathcal{S}_c|\leq k$. The computation of marginal gain is conducted by function COMPUTE\_MARGINAL\_GAIN. Once a user $x$ is added to $\mathcal{S}_c$, we update the user credit $UC_c$ and the set credit $SC_c$ with function UPDATE respectively. The performance of the described streaming algorithm (Algorithm 5) is guaranteed by Theorem~\ref{th:k}. 

\begin{theorem} \label{th:k}
Algorithm~\ref{str_k} produces a solution $\mathcal{S}$ such that $\sigma_{mCD}(\mathcal{S}) \geq \left(\frac{1}{2}-\epsilon\right)\textrm{OPT}.$
\end{theorem}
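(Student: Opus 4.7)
\smallskip
\noindent\textbf{Proof plan for Theorem~\ref{th:k}.}
The plan is to adapt the single-pass threshold analysis of Sieve-Streaming~\cite{18} to the adaptive-$m$ variant used in Algorithm~\ref{str_k}. First, I would note that Lemma~\ref{SetO} still applies after replacing the upper bound $k\cdot m$ by $2k\cdot m$, since the proof only used $\text{OPT}\le km$: thus at termination there exists $c^{\star}\in\mathcal{O}$ with $(1-\epsilon)\text{OPT}\le c^{\star}\le \text{OPT}$. Because the algorithm outputs $\arg\max_{c}\sigma_{mCD}(\mathcal{S}_c)$, it suffices to prove $\sigma_{mCD}(\mathcal{S}_{c^{\star}})\ge (\tfrac12-\epsilon)\text{OPT}$, and I would split on whether $|\mathcal{S}_{c^{\star}}|=k$ at termination.

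\smallskip
\noindent\textbf{Case A: $|\mathcal{S}_{c^{\star}}|=k$.} Every element inserted into $\mathcal{S}_{c^{\star}}$ contributed at least $c^{\star}/(2k)$ to the objective at its moment of insertion. Telescoping the $k$ marginal contributions gives
$$\sigma_{mCD}(\mathcal{S}_{c^{\star}})\ge k\cdot \frac{c^{\star}}{2k}=\frac{c^{\star}}{2}\ge \frac{1-\epsilon}{2}\text{OPT}\ge \left(\frac12-\epsilon\right)\text{OPT}.$$

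\smallskip
\noindent\textbf{Case B: $|\mathcal{S}_{c^{\star}}|<k$.} Let $\mathcal{S}^{\star}$ be an optimal feasible set. The crux is to show that for every $x\in\mathcal{S}^{\star}\setminus\mathcal{S}_{c^{\star}}$,
$$\sigma_{mCD}(\mathcal{S}_{c^{\star}}\cup\{x\})-\sigma_{mCD}(\mathcal{S}_{c^{\star}})<\frac{c^{\star}}{2k}.$$
I would establish this by tracking the status of $c^{\star}$ relative to $\mathcal{O}$ at the instant $x$ was scanned. If $c^{\star}\in\mathcal{O}$ at that time, then $x$ was rejected because its then-current marginal gain was below $c^{\star}/(2k)$, and submodularity extends the bound to the final $\mathcal{S}_{c^{\star}}$. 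If $c^{\star}\notin\mathcal{O}$ at that time, then since $m$ is non-decreasing and $c^{\star}\in\mathcal{O}$ at termination (so $m_{\text{final}}\le c^{\star}$), the reason $c^{\star}$ was absent must be $c^{\star}>2k\cdot m$, i.e., $m<c^{\star}/(2k)$. But because $x$ had already been scanned, $m\ge \sigma_{mCD}(\{x\})$, so $\sigma_{mCD}(\{x\})<c^{\star}/(2k)$, and submodularity (with $\sigma_{mCD}(\emptyset)=0$) bounds the marginal gain of $x$ w.r.t.\ any set by $\sigma_{mCD}(\{x\})$. Summing the per-element bound over $\mathcal{S}^{\star}\setminus\mathcal{S}_{c^{\star}}$, using monotonicity $\text{OPT}\le\sigma_{mCD}(\mathcal{S}^{\star}\cup\mathcal{S}_{c^{\star}})$ and a standard submodular telescoping, yields
$$\text{OPT}-\sigma_{mCD}(\mathcal{S}_{c^{\star}})\le \sum_{x\in\mathcal{S}^{\star}\setminus\mathcal{S}_{c^{\star}}}\!\!\bigl[\sigma_{mCD}(\mathcal{S}_{c^{\star}}\cup\{x\})-\sigma_{mCD}(\mathcal{S}_{c^{\star}})\bigr]<k\cdot\frac{c^{\star}}{2k}=\frac{c^{\star}}{2}\le \frac{\text{OPT}}{2},$$
so $\sigma_{mCD}(\mathcal{S}_{c^{\star}})>\text{OPT}/2\ge(\tfrac12-\epsilon)\text{OPT}$.

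\smallskip
\noindent\textbf{Main obstacle.} The routine parts are the telescoping and the submodularity bookkeeping; the delicate step is the subcase in which $x\in\mathcal{S}^{\star}$ is streamed \emph{before} $c^{\star}$ ever enters $\mathcal{O}$, since then the algorithm never had a chance to ``reject'' $x$ on threshold grounds. The argument that closes this gap is the observation that the still-small value of $m$ at that moment simultaneously certifies $\sigma_{mCD}(\{x\})<c^{\star}/(2k)$, which by submodularity dominates the marginal gain of $x$ against any later set. This coupling between the adaptive threshold $m$ and the singleton-value bound on marginals is what makes the single-pass analysis go through.
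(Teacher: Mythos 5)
Your proposal follows essentially the same route as the paper's proof: the same choice of a candidate $c^\star\in[(1-\epsilon)\mathrm{OPT},\mathrm{OPT}]$ via Lemma~\ref{SetO}, the same case split on whether $|\mathcal{S}_{c^\star}|=k$, the same telescoping in the full case, and the same submodular sum over $\mathcal{S}^\star\setminus\mathcal{S}_{c^\star}$ in the deficient case. The only substantive difference is that in Case B you explicitly justify the per-element rejection bound in the sub-case where an optimal element is streamed before $c^\star$ enters $\mathcal{O}$ (via $\sigma_{mCD}(\{x\})\le m<c^\star/(2k)$ and submodularity), a step the paper's proof asserts without argument; your version is the more complete one.
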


\begin{proof}
Given $c' \in \mathcal{O}$ falling into $[(1-\epsilon) \textrm{OPT}, \textrm{OPT}]$, let us discuss the following two cases for the thread corresponding to $c'$.

\underline{Case 1:} $|\mathcal{S}_c'| = k$. For $1\le i\le k$, let $u_i$ be the element added to $\mathcal{S}_c'$ in the $i$-th iteration of the for-loop. Then we obtain
\begin{align*}
   \sigma &_{mCD}(\mathcal{S}_c')=\sigma_{mCD}(\{u_1,u_2,\ldots,u_k\}) \\&\geq \sigma_{mCD}(\{u_1,u_2,\ldots,u_k\})-\sigma_{mCD}(\emptyset)\\
   &=\sum_{i=1}^k \big[\sigma_{mCD}(\{u_1,\ldots,u_i\})-\sigma_{mCD}(\{u_1,\ldots,u_{i-1}\}) \big].
\end{align*}
By the condition in Line 8 of Algorithm~1, for $1\le i\le k$, we have
$$\sigma_{mCD}(\{u_1,u_2,\ldots,u_i\})-\sigma_{mCD}(\{u_1,u_2,\ldots,u_{i-1}\})\geq\frac{c}{2k},$$
and hence
$$\sigma_{mCD}(\mathcal{S}_c')\geq \frac{v}{2k} \cdot k \geq \frac{(1-\epsilon)}{2}\textrm{OPT}.$$

\underline{Case 2:} $|\mathcal{S}_c'| < k$. Let $\bar{\mathcal{S}_c'}= \mathcal{S}^* \backslash \mathcal{S}_c'$, where $\mathcal{S}^*$ is the optimal solution to the optimization problem. For each element $a \in \bar{\mathcal{S}_c'}$, we have
\begin{align*}
	\sigma_{mCD}(\mathcal{S}_c' \cup \{a\}) - \sigma_{mCD}(\mathcal{S}_c') <\frac{v}{2k}.
\end{align*}
Since $f$ is monotone submodular, we obtain
\begin{align*}
	\sigma_{mCD}(\mathcal{S}^*)&-\sigma_{mCD}(\mathcal{S}_c') =\sigma_{mCD}(\mathcal{S}_c' \cup \bar{\mathcal{S}_c'}) -\sigma_{mCD}(\mathcal{S}_c') \\
	&\leq \sum_{a\in \bar{\mathcal{S}}}[\sigma_{mCD}(\mathcal{S}_c' \cup \{a\}) - \sigma_{mCD}(\mathcal{S}_c')] \\
	&< \frac{v}{2k} \cdot k \leq \frac{1}{2}\sigma_{mCD}(\mathcal{S}^*),
\end{align*}
which implies that
$$\sigma_{mCD}(\mathcal{S}_c') >\frac{1}{2}\sigma_{mCD}(\mathcal{S}^*)=\frac{1}{2}\textrm{OPT}\geq \frac{(1-\epsilon) }{2}\textrm{OPT}.$$

Since $\mathcal{S} = \text{argmax}_{\mathcal{S}_c, c \in \mathcal{O} 
}\sigma_{mCD}(\mathcal{S}_c)$, there is $\sigma_{mCD}(\mathcal{S})\geq \sigma_{mCD}(\mathcal{S}_{c})$ for any $c \in \mathcal{O}$. As we have shown that $\sigma_{mCD}(\mathcal{S})\geq \frac{(1-\epsilon) }{2}\textrm{OPT}$, we obtain $$\sigma_{mCD}(\mathcal{S})\geq \sigma_{mCD}(\mathcal{S}_{c'})\geq  \left( \frac{1}{2} - \epsilon \right)\text{OPT}.$$
\end{proof}

\begin{breakablealgorithm}
\caption{STREAMING\_ALGORITHM($k,UC$)}
\label{str_k}
\begin{algorithmic}[1]
\State \textbf{Initialize:} $SC[a][u] := 0$ for all actions and users; $m:=0$.
\State $\mathcal{O}:=\{(1+\epsilon)^i | i \in \mathbb{Z}\}$.
\State $\mathcal{S}_c := \emptyset, UC_c := UC$ and $SC_c := SC$ for all $c \in \mathcal{O}$. 
\NoDoFor{each $x \in \mathcal{V}$}
	\State $m:=\max\{m, \sigma_{mCD}(\{x\})\}$
	\State $\mathcal{O}:=\{(1+\epsilon)^i | i \in \mathbb{Z}, m \leq (1+\epsilon)^i \leq 2k\cdot m\}$.
	\NoDoFor{$c \in \mathcal{O}$}
		\NoThenIf{COMPUTE\_MAGINAL\_GAIN($x,UC_c,SC_c$) $\geq \frac{c}{2k}$ and $|\mathcal{S}_c| < k$}
			\State $\mathcal{S}_c := \mathcal{S}_c \cup \{x\}$.
			\State $<UC_c, SC_c> :=$ UPDATE($x$, $UC_c$, $SC_c$)
		\EndIf
	\EndFor
\EndFor
\State \textbf{return} $\mathcal{S} := \text{argmax}_{\mathcal{S}_c, c \in \mathcal{O}
}\sigma_{mCD}(\mathcal{S}_c)$.
\end{algorithmic}
\end{breakablealgorithm} 

\subsubsection{Budgeted Influence Maximization Problem Solver}

Now, we consider the more general budgeted influence maximization problem (problem~(\ref{overallproblem})), we first modify the threshold in line~6 of Algorithm~\ref{str_k} to $\frac{2qg_x}{3b}$, where $q\in\mathcal{Q}:=\{(1+3\epsilon)^i | i \in \mathbb{Z}, \frac{m}{1+3\epsilon} \leq (1+3\epsilon)^i \leq 2b \cdot m\}$, $g_x$ is the weight of user $x$, and $b$ is the total budget. Moreover, the modified algorithm keeps searching for a particular user who has dominated influences. The property of such a user is described by Theorem~\ref{lemmabigelement}. At the end of the modified algorithm, we might have two types of sets: one is collected by the modified threshold, and the other exists if a user described in Theorem~\ref{lemmabigelement} is found. The set with a higher objective value will be the final algorithm output. The pseudo-code is presented in Algorithm~\ref{str_b} below. Such an algorithm can solve problem~(\ref{overallproblem}) with ($\frac{1}{3}-\epsilon$)-approximation to the optimal solution according to Theorem 1 in~\cite{19}

\begin{theorem}
\label{lemmabigelement}
Assume $q\in [(1-3\epsilon)\text{OPT}, \text{OPT}$], $x$ satisfies $g_x \geq \frac{b}{2}$, and its marginal gain per weight is larger than $\frac{2q}{3b}$. Then, we have $\sigma_{mCD}(x) \geq \left(\frac{1}{3}-\epsilon\right)\text{OPT}$.
\end{theorem}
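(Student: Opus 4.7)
The plan is essentially a chain of three substitutions, so the ``proof'' is really just a careful unwinding of the three hypotheses. First, I would observe that the ``marginal gain per weight'' of $x$ with respect to the empty seed set equals $(\sigma_{mCD}(\{x\})-\sigma_{mCD}(\emptyset))/g_x = \sigma_{mCD}(\{x\})/g_x$, since $\sigma_{mCD}(\emptyset)=0$: the recursion defining $\Gamma_{\emptyset,u}(a)$ has no base case $u\in \mathcal{S}$ that ever fires, so it collapses to zero and hence $\kappa_{\emptyset,u}=0$ for every $u$. Consequently the threshold hypothesis reads
\[
\frac{\sigma_{mCD}(\{x\})}{g_x} > \frac{2q}{3b}.
\]

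Next, I would combine this with the weight bound $g_x\ge b/2$ and the range of $q$ in sequence. Multiplying through by $g_x$ and then substituting $g_x\ge b/2$ gives
\[
\sigma_{mCD}(\{x\}) \;>\; \frac{2q\,g_x}{3b} \;\ge\; \frac{2q}{3b}\cdot \frac{b}{2} \;=\; \frac{q}{3}.
\]
Using the lower endpoint of the range $q\in[(1-3\epsilon)\mathrm{OPT},\mathrm{OPT}]$ then yields
\[
\sigma_{mCD}(\{x\}) \;>\; \frac{(1-3\epsilon)\mathrm{OPT}}{3} \;=\; \left(\frac{1}{3}-\epsilon\right)\mathrm{OPT},
\]
which is (even slightly stronger than) the asserted bound $\sigma_{mCD}(\{x\})\ge (\tfrac{1}{3}-\epsilon)\mathrm{OPT}$.

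The main obstacle is effectively nonexistent; the only conceptual point worth flagging is that ``marginal gain per weight at $x$'' here refers to the gain of $x$ over the empty set (this is how a dominated element is identified before the seed set starts accumulating), so it reduces cleanly to $\sigma_{mCD}(\{x\})/g_x$ without any dependence on later algorithmic state. Once that identification is made, the argument requires no appeal to submodularity, monotonicity, or any previously established lemma, only the three inequalities furnished in the hypotheses.
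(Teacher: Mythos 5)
Your proof is correct, and it is essentially the only argument available: the paper itself states Theorem~\ref{lemmabigelement} without proof, deferring the overall $(\tfrac{1}{3}-\epsilon)$-guarantee to Theorem~1 of~\cite{19}, so there is no in-paper derivation to diverge from. Your two supporting observations are both sound -- the ``marginal gain per weight'' in the big-element test of Algorithm~\ref{str_b} is computed against the untouched credits $UC, SC$, i.e.\ against the empty seed set, and $\sigma_{mCD}(\emptyset)=0$ because the recursion for $\Gamma_{\emptyset,u}(a)$ never hits the base case -- after which the chain $\sigma_{mCD}(\{x\}) > \frac{2q g_x}{3b} \geq \frac{q}{3} \geq \frac{(1-3\epsilon)}{3}\text{OPT} = \left(\frac{1}{3}-\epsilon\right)\text{OPT}$ closes the claim with nothing further needed.
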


\begin{breakablealgorithm}
\caption{BUDEGETED\_STREAMING\_ALGORITHM($b,UC$)}
\label{str_b}
\begin{algorithmic}[1]
\State \textbf{Initialize} $SC[a][u] := 0$ for all actions and users; $m = 0$.
\State $\mathcal{Q}:=\{(1+3\epsilon)^i | i \in \mathbb{Z}\}$.
\State $\mathcal{S}_q := \emptyset, UC_q := UC$ and $SC_q := SC$ for all $q \in \mathcal{Q}$. 
\NoDoFor{each $x \in \mathcal{V}$}
	\State{
	 $m:= \max\{m, $MG[$x$]:= \\
	  \qquad \quad COMPUTE\_MAGINAL\_GAIN($x,UC,SC$)$/g_x\}$
	 }
	\State $\mathcal{Q}:=\{(1+3\epsilon)^i | i \in \mathbb{Z}, \frac{m}{1+3\epsilon} \leq (1+3\epsilon)^i \leq 2b \cdot m\}$.
	\NoDoFor{$q \in \mathcal{Q}$}
		\NoThenIf{$w_x \geq \frac{b}{2}$ and $\frac{MG[x]}{w_x} \geq \frac{2q}{3b}$}
			\State $\mathcal{S}_q := \{x\}$.
			\State \textbf{break}.
		\EndIf
		\NoThenIf{COMPUTE\_MAGINAL\_GAIN($x,UC_q,SC_q$) $\geq \frac{2qg_x}{3b}$ and $g^TI_{\mathcal{S}_q \cup \{x\}}\leq b$}
			\State $\mathcal{S}_q := \mathcal{S}_q \cup \{x\}$.
			\State $<UC_q, SC_q> := UPDATE(x, UC_q, SC_q)$
		\EndIf
	\EndFor
\EndFor
\State \textbf{return} $\mathcal{S} := \text{argmax}_{q \in \mathcal{Q}}\sigma_{mCD}(\mathcal{S}_q)$.
\end{algorithmic}
\end{breakablealgorithm}


\section{Experimental Results}
We conduct our experiments on a reduced Twitter dataset~\cite{15}, containing about 17,000 users and 100 actions to evaluate the mCD model and the corresponding streaming algorithms. Specifically, we are interested in the following performance metrics: 1) the influence ability of the seed set provided by our proposed streaming algorithm;  2) the gap between the output influence ability and the number of people that truly get influenced; and 3) the running time of the algorithm. All experiments are conducted at a server with a 3.50GHz Quad-Core Intel Xeon CPU E3-1245 and 32GB memory. 

\subsection{Experiment Setup}
The Twitter dataset records three different user activities, namely ``retweet", ``quote" and ``reply". In our experiments, an action $a_i$ is claimed if any user reacts (including retweet, quote, and reply to) the post of specific user $u_i$. For example, suppose there are two users, $u_1$ and $u_2$. Then, the action space could be $\mathcal{A} = \{a_1, a_2\}$. When $u_1$ performs action $a_2$, it means that user $u_1$ either ``retweets", ``quotes" or ``replies" to the twitter post of user $u_2$. Note that the cardinality of the action space may not match the cardinality of the user space. In particular, if we can only find records indicating $u_1$ either ``retweets", ``quotes" or ``replies" to the twitter post of user $u_2$, the action space is $\mathcal{A} = \{a_2\}$, where $|\mathcal{A}|=1$. In this way, the considered dataset contains 17,000 users and about 100 different actions. According to the discussion in Section~3, the event log is divided into two parts, where the training set contains 80 different actions and the test set contains 20 different actions.

\subsection{Experimental Results}
In this subsection, we are going to show:
\begin{itemize}
   \item The seed set identified by the mCD model has better quality in the sense of the influence ability.  
   \item Under the same mCD model, the streaming algorithms can achieve close utilities to the Cost-Effective Lazy Forward selection (CELF) algorithm proposed in~\cite{10}, an accelerated greedy algorithm;
   \item Under the same mCD model, the streaming algorithms are much faster than the CELF algorithm.
  \item As an estimator on the total number of users that get influenced, the mCD model is more accurate than the conventional CD model; 
  
\end{itemize}

Note that it has been about ten years after CELF was proposed, and there have been more recent influence maximization algorithms proposed in, e.g., \cite{martingales} and \cite{mssa}. However, these recent ones were designed to apply different influence models and to solve differently formulated problems, compared with ours. Thus, no performance comparison is given against these methods in this paper as it may not be a fair comparison under our setup. To be more specific, note that one of the main contributions in our paper is to study the budgeted influence maximization problem. Therefore, the experiments mainly focus on how to select a group of users to maximize the influence considering the selection cost of users. However, none of the above two papers studied such a budgeted influence maximization problem. In addition, the algorithms proposed in papers \cite{martingales} and \cite{mssa} are applicable to influence maximization problem under the diffusion models, while our streaming algorithm is built upon the credit distribution model, which is under a quite different problem setup.

For notation simplicity, the output results of the CELF algorithm and the streaming algorithm under the mCD model are denoted by ``mCD\_greedy" and ``mCD\_streaming" respectively. 

\subsubsection{Quality of Seed Sets}
First, we focus on the evaluation of the following three models: 
\begin{itemize}
\item \textbf{IC model}: a conventional non-credit distribution model with edge probabilities assigned as $0.1$ uniformly (in all experiments, we run $10,000$ MC simulations)~\cite{5}; 
\item \textbf{CD model}: with direct credit assigned as described in~\cite{12} and the CELF algorithm is used to produce solutions.
\item \textbf{mCD model}: a multi-action credit distribution model proposed in this paper and the CELF algorithm is used to produce solutions.
\end{itemize}

After the seed sets produced by these three kinds of approaches, we compare the influence ability of different results on the mCD model to verify the quality of seed sets. It can be observed in Fig.~\ref{fig:quality} that the influence ability of seed sets picked by the mCD model is larger than those running on the IC models and the conventional CD model. For instance, when $k=50$, the influence ability of the seed set picked by the CELF algorithm on the mCD model is 1350.08, while the influence ability on other two models (CD, IC) is 1280.24 and 220.03, respectively. Based on the curve in Fig.~\ref{fig:quality}, we conclude that our proposed model has an improved capability in identifying seed sets and describing the influence propagation in online social networks. 
\begin{figure}
   \centering
   \includegraphics[width=0.8\linewidth]{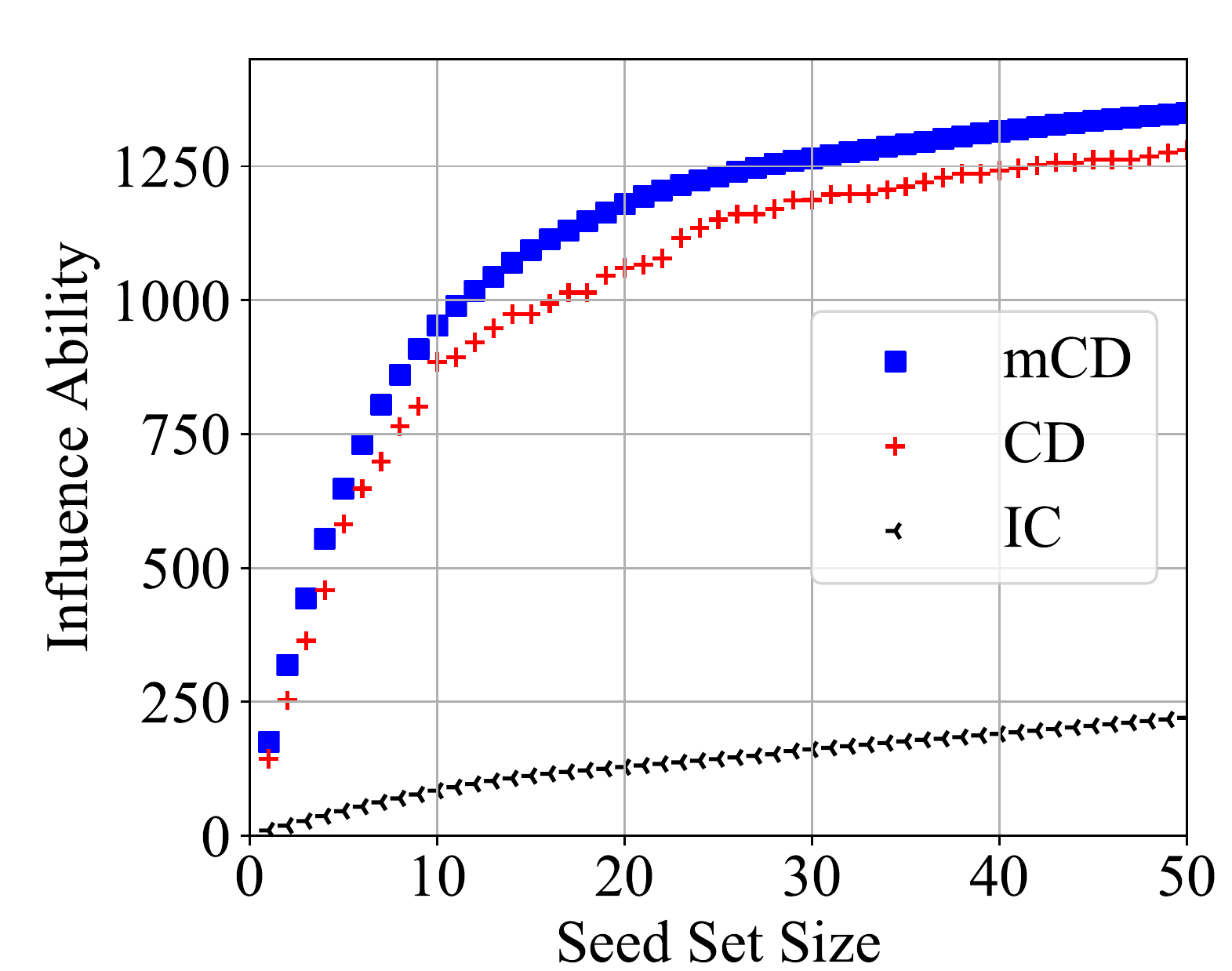} 
   \caption{Influence Ability Comparison over Different Models.}
   \label{fig:quality}
\end{figure}

\subsubsection{Influence Ability of Seed Set}
Next, we compare the influence ability of different seed sets obtained by our proposed streaming algorithms and the CELF algorithm under the same mCD model. For both the influence maximization problem and budgeted influence maximization problem, from Fig.~\ref{fig: quality} and~\ref{fig:b_quality}, we can observe that the seed sets provided by our proposed streaming algorithms can achieve utilities close to the CELF algorithm. For instance, in Fig.~\ref{fig: quality}, when $k = 50$, a seed set with 1333.56 influence ability is given by the streaming algorithm (Algorithm~\ref{str_k}), which is only 1\% less than the influence ability given by the CELF algorithm. Moreover, in Fig.~\ref{fig:b_quality}, taking $b=500$ as an example, the influence ability of the seed set provided by the streaming algorithm (Algorithm~\ref{str_b}) is 0.1\% less than the CELF algorithm. Therefore, we conclude that our proposed streaming algorithms are sufficient to identify seed sets with close influence ability to the CELF algorithm.
\begin{figure}
   \centering
   \includegraphics[width = 0.9\linewidth]{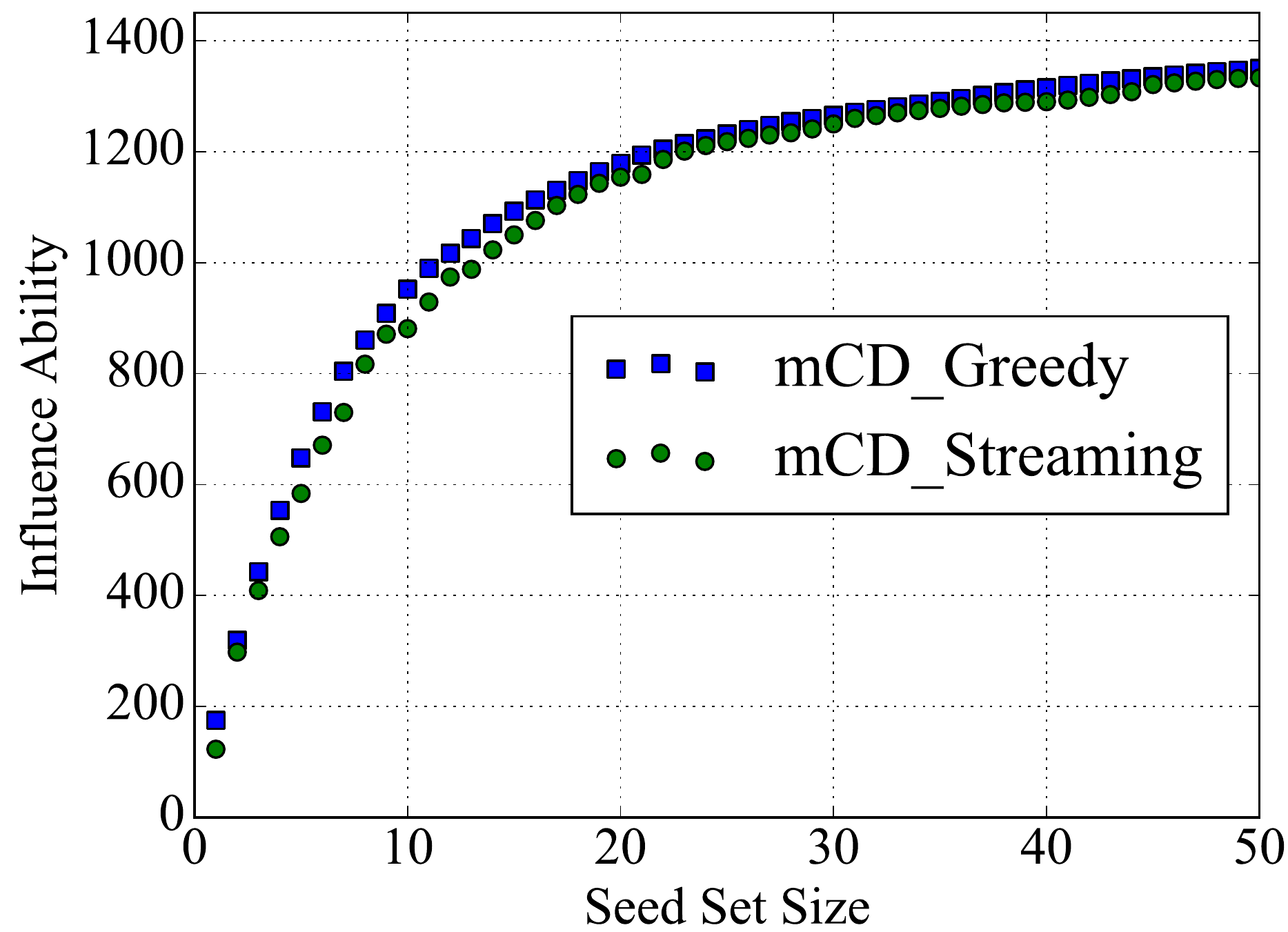}
   \caption{Influence Ability Comparison under mCD Model with the Cardinality Constraint.}
   \label{fig: quality}
\end{figure} 
\begin{figure} 
   \centering
   \includegraphics[width=0.85\linewidth]{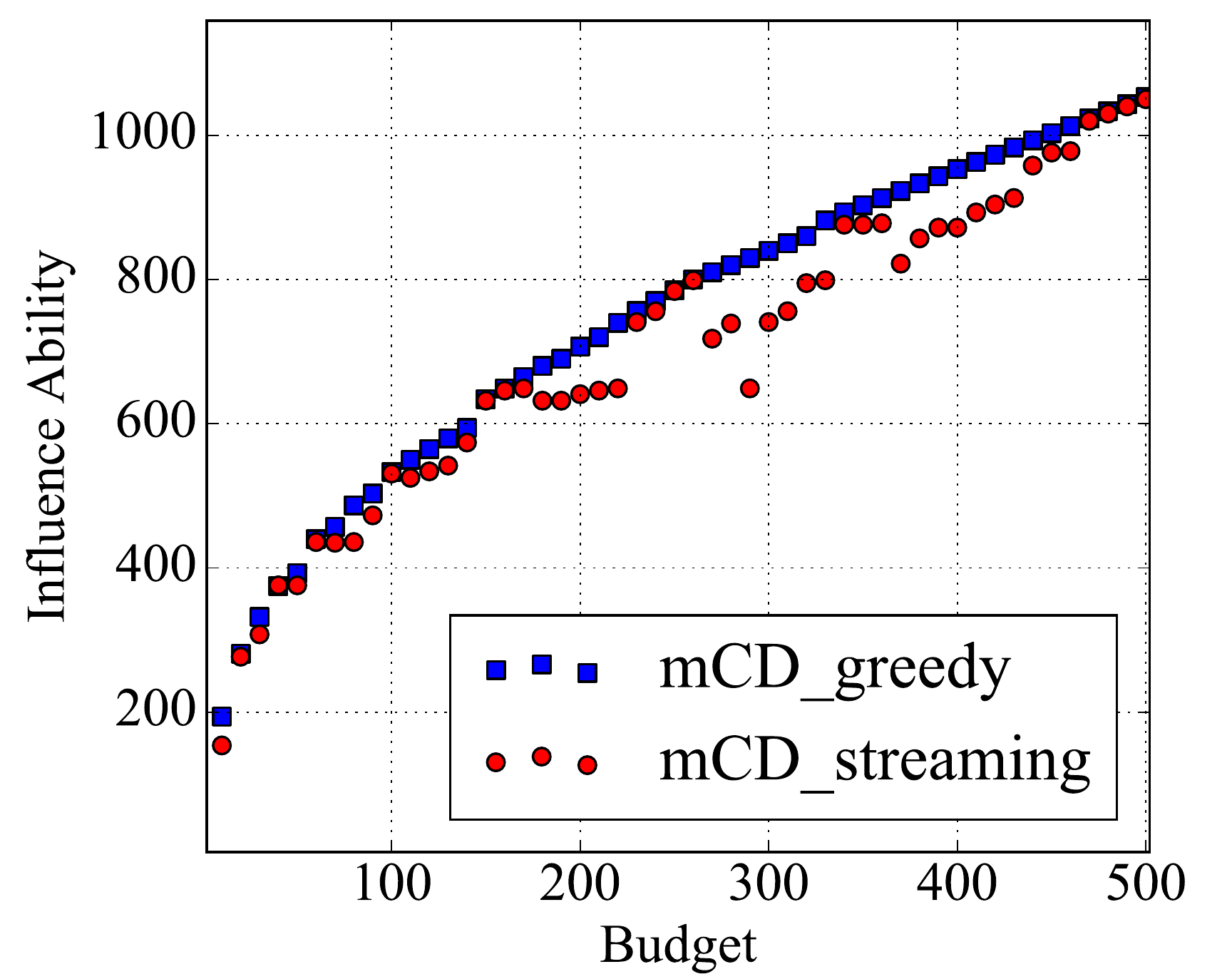} 
   \caption{Influence Ability Comparison under mCD Model with the Budget Constraint.}
   \label{fig:b_quality}
\end{figure}

\subsubsection{Algorithm Running Time}
It has been shown in Fig. \ref{fig: quality} and Fig.~\ref{fig:b_quality} that our proposed streaming algorithms are able to provide performance close to the CELF algorithm. Such an algorithm maintains a list of current marginal gains of all the elements in the ground set, keeps them updated, and sorts the list in an increasing order recursively. Unlike the CELF algorithm, the proposed streaming algorithm only requires one scan over the user set. Therefore, the resulting lower computation complexity makes the streaming algorithm more practical when the number of elements in the ground set is large. To further examine this argument, we compare the running time (in seconds) for the CELF and the streaming algorithm with the same mCD model in Figs.~\ref{fig:r} and \ref{fig:rb}. It can be seen that for both the influence maximization and budgeted influence maximization problem, our proposed streaming algorithm is several orders of magnitude faster. Especially, in the case of the budgeted influence maximization problem, when the budget is set to be $500$, it takes more than $3,800$ seconds to complete the whole process in CELF, while for the streaming algorithm, it only takes $5.3$ seconds. Meanwhile, the streaming algorithm achieves almost the same performance as CELF, which implies that our proposed streaming algorithm is both efficient and effective. 
\begin{figure}
   \centering
   \includegraphics[width = 1\linewidth]{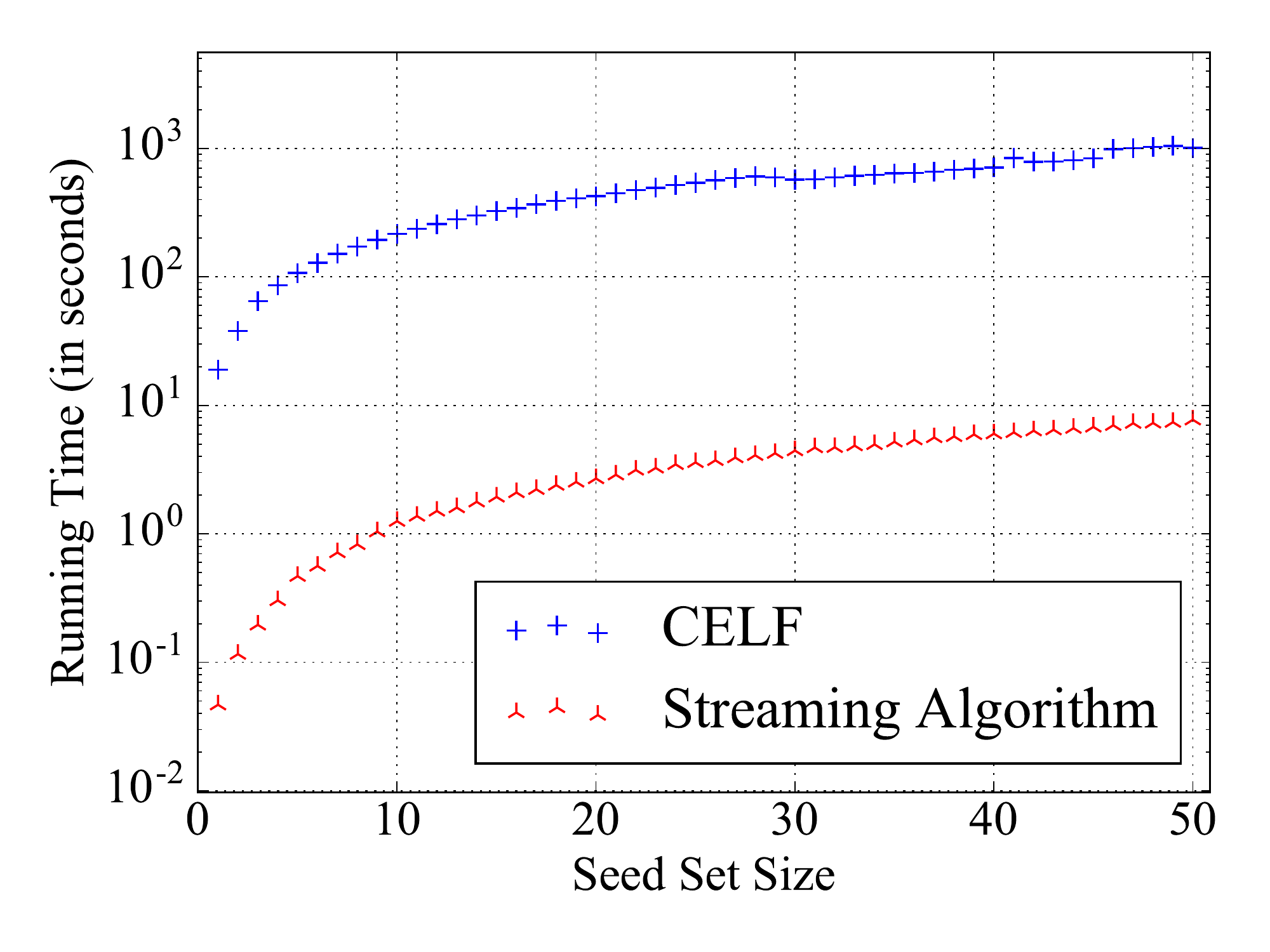} 
   \caption{Running Time Comparison with the Cardinality Constraint.}
   \label{fig:r}
\end{figure}
\begin{figure}
   \centering
   \includegraphics[width = 0.9\linewidth]{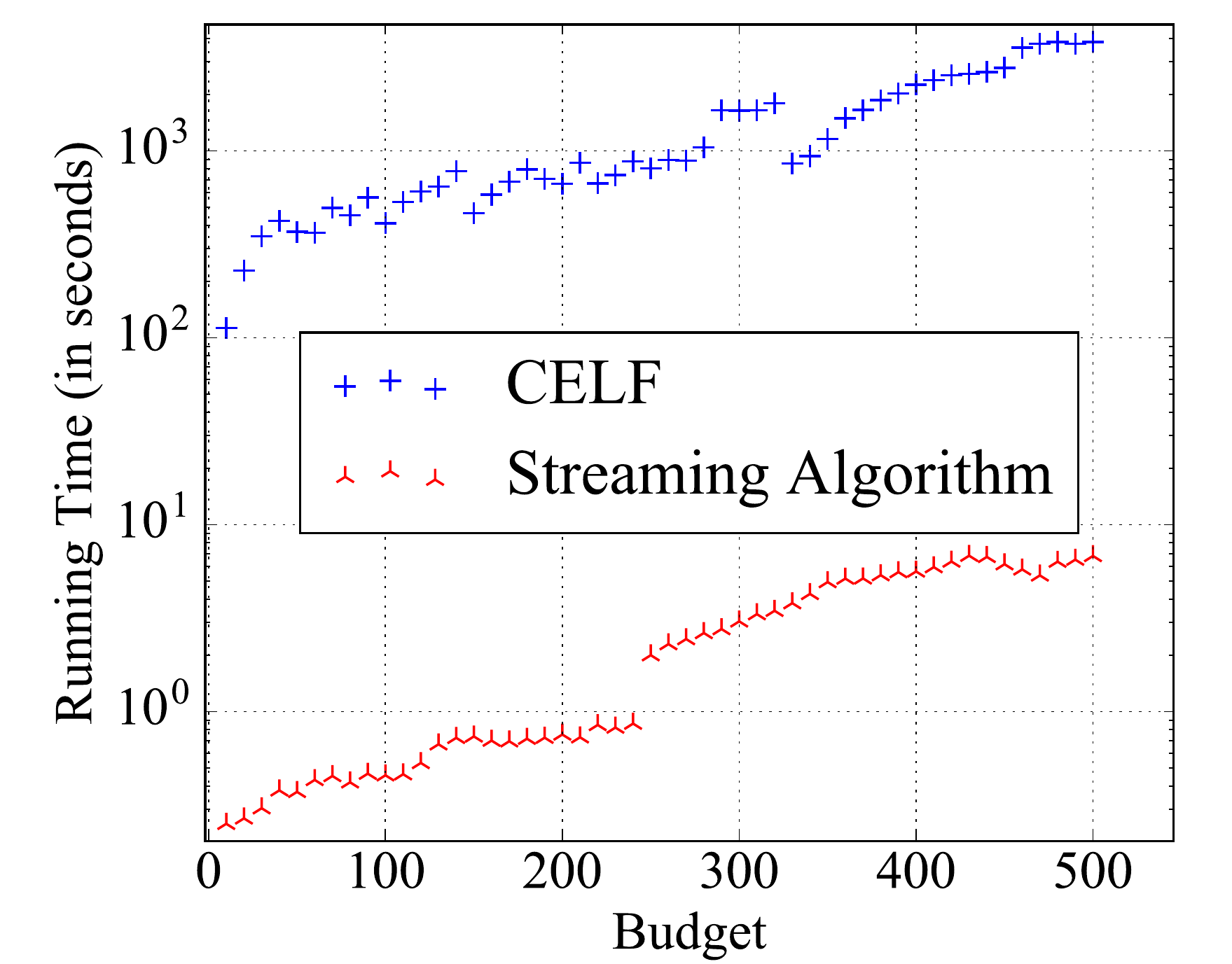} 
   \caption{Running Time Comparison with the Budget Constraint.}
   \label{fig:rb}
\end{figure}

\subsubsection{Estimation on the Number of Influenced People}
Our goal in this experiment is to investigate how the mCD model performs in estimating the number of people that get influenced in the network. We pick 950 actions from the original dataset for this experiment. Since the streaming algorithm (Algorithm \ref{str_k}) is able to achieve close performance to CELF with a much faster speed, we only conduct the streaming algorithm to explore the estimation accuracy. Note that when we set the seed set size equal to the number of initiators for a particular action, the mCD model can always provide the actual number of influenced people \footnote{Given an action, the ground truth is always accessible by simply counting the number of people performing the give action in the dataset.}. Then, we fix the size of the seed set as 50 to explore the estimation accuracy. To better illustrate, we sort actions with increasing popularity. It can be observed in Fig.~\ref{fig:test} that the estimated values obtained by both the CD and the mCD models are smaller than the actual number of users performing the corresponding action in the social network. However, it can be seen that the estimated results by our proposed model are closer to the true values, which means that the estimation with our model is more accurate for a given seed set size.
\begin{figure}
   \centering
   \includegraphics[width = 0.8\linewidth]{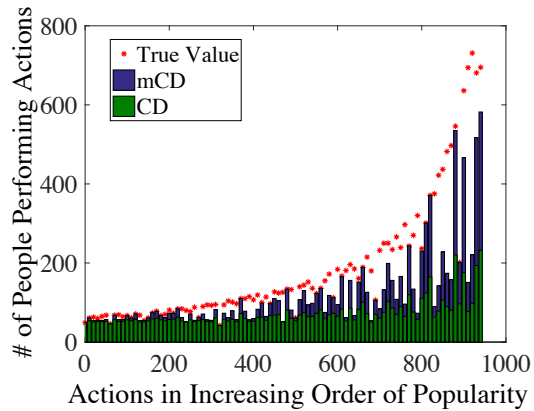} 
   \caption{Estimated Influence for Actions in Test Set.}
   \label{fig:test}
\end{figure}

\section{Conclusion} 
Our work is novel in three folds: 1) we are the first to study the multi-action event log by extending the credit distribution model, which cannot be directly derived from Goyal's work in \cite{12}; 2) different from previous papers, we focus on the budgeted influence maximization problem under credit distribution models, instead of the influence maximization problem under propagation models that involve edge weights; and 3) we propose a streaming algorithm to solve the budgeted influence maximization problem, whose theoretical analysis is different from Badanidiyuru's results in \cite{18}.

More specifically, we extended the conventional CD model to the mCD model in dealing with the multi-action event log and analyzing the influence ability of users in online social networks. Based on this credit model, an efficient streaming algorithm was developed to provide a solution with ($\frac{1}{2}-\epsilon$)-approximation of the optimal value under the influence maximization problem with a cardinality constraint, and ($\frac{1}{3}-\epsilon$)-approximation under the budgeted influence maximization problem. More specifically, we re-designed the credit assignment method in the CD model by utilizing a modified harmonic mean to handle multi-action event logs. This new credit assignment method not only makes full use of the multi-action event log but also achieves higher accuracy in estimating the total number of people that get influenced without the edge weight assignment and expensive Monte-Carlo simulations. Experiments showed that the mCD model is more accurate than the conventional CD model, and able to identify a seed set with higher quality than both the IC and CD models. Even under the same mCD model, the proposed streaming algorithms are able to achieve similar performance to the CELF greedy algorithm, but several orders of magnitude faster.

\end{document}